\documentclass[lettersize,journal]{IEEEtran}
\usepackage{amsmath,amsfonts}
\usepackage{algorithmic}
\usepackage{array}
\usepackage{hyperref}
\usepackage{caption}
\usepackage{subfig}
\usepackage{textcomp}
\usepackage{stfloats}
\usepackage{url}
\usepackage{verbatim}
\usepackage{makecell}
\usepackage{graphicx}
\usepackage{hyperref}
\usepackage{array}
\usepackage{multirow}
\usepackage{amsmath}
\usepackage{amsthm}
\usepackage{amssymb}
\usepackage{etoolbox}
\usepackage{algorithmic}
\usepackage{algorithm}
\usepackage{booktabs} 
\usepackage{graphicx}
\usepackage{multirow}
\usepackage{array} 
\usepackage{xspace}
\usepackage{newtxtext,newtxmath}
\usepackage{balance}
\usepackage{cite}
\allowdisplaybreaks

\newtheorem{myTheo}{Theorem}

\newtheorem{myLem}{Lemma}
\newtheorem{myRmk}{Remark}

\def\BibTeX{{\rm B\kern-.05em{\sc i\kern-.025em b}\kern-.08em
		T\kern-.1667em\lower.7ex\hbox{E}\kern-.125emX}}

\begin{document}
	\title{Semantic-aware Token Selection and Resource Optimization for Communication-efficient Split Federated Fine-tuning in Edge Intelligence} 
    
	\author{
		Xianke Qiang, Zheng Chang,~\IEEEmembership{Senior~Member,~IEEE,} Geyong~Min,~\IEEEmembership{Senior Member,~IEEE}
		\thanks{X. Qiang and Z. Chang are with School of Computer Science and Engineering, University of Electronic Science and Technology of China, Chengdu 611731, China. G. Min is with Department of Computer Science, University of Exeter, Exeter, EX4 4QF, U.K.}
	}
	\maketitle

	\begin{abstract}
		  Deploying large Transformer-based vision models on resource-limited mobile devices at network edge is severely constrained by hardware limitations and dynamic wireless environments. While federated learning (FL) enables collaborative training without sharing raw data, strictly local fine-tuning of such massive models remains computationally prohibitive for edge devices. Split federated learning (SFL) alleviates this burden by offloading deep layers to the edge server, yet it suffers from heavy communication overhead when transmitting high-dimensional activation tokens. To address this bottleneck, we propose ST-SFLora, a semantic token-based split federated LoRA fine-tuning framework. We introduce a new metric, \emph{Semantic Transmission Efficiency} (STE), to balance semantic retention and transmission cost. Based on STE, we formulate a joint resource optimization problem that dynamically determines token selection, uplink bandwidth allocation, and transmit power under latency and energy constraints. The resulting mixed-integer nonconvex problem is efficiently solved via an alternating algorithm. Experiments on multiple benchmarks demonstrate that ST-SFLora achieves the lowest client-side resource consumption among baselines while delivering a favorable trade-off between communication efficiency and model performance.
	\end{abstract}
	
	\begin{IEEEkeywords}
		  split federated learning, resource allocation, vision transformer, fine-tuning, token selection.
	\end{IEEEkeywords}

	\section{Introduction}	

    Driven by rapid advances in Aritifical Intelligence (AI), Transformer-based foundation models such as BERT~\cite{devlin2019bert}, ViT~\cite{dosovitskiy2020image}, and GPT~\cite{achiam2023gpt} have achieved state-of-the-art performance in natural language processing and computer vision, demonstrating strong scalability and generalization across modalities. Their success is largely enabled by the pre-training and fine-tuning paradigm, which allows large models to be adapted efficiently to downstream tasks\cite{bai2024federated}. At the same time, the rapid proliferation of intelligent devices in mobile edge computing (MEC) systems is generating large volumes of decentralized and privacy-sensitive data at the wireless edge~\cite{lin2024splitlora}. Relying on cloud-based centralized learning to collect raw data from edge devices raises critical concerns, including privacy exposure and excessive bandwidth consumption \cite{10835069}. Federated learning (FL)\cite{cai2023efficient,mcmahan2017communication} addresses these issues by keeping data local and aggregating model updates on the server. However, fine-tuning Transformer-based large models on heterogeneous edge devices remains highly challenging. Even with parameter-efficient techniques such as LoRA, the memory footprint and communication demand of these models still exceed the capabilities of typical mobile devices\cite{li-etal-2025-mobilora}.\par
    To overcome the limitations of on-device fine-tuning, Split Learning (SL)\cite{vepakomma2018split} and Split Federated Learning (SFL)\cite{thapa2022splitfed} offload most computation from clients to the server by partitioning the model between the two sides. However, when applied to large Transformer architectures, these approaches introduce new communication bottlenecks. Traditional SFL requires transmitting client-side weights, intermediate activations, and activation gradients in every iteration\cite{thapa2022splitfed,10839234}. Among these, uploading high-dimensional activations dominates the communication cost\cite{qiang2025deploying}. For example, the activations produced by a ViT-B/16 model for a single batch can exceed tens of megabytes\texorpdfstring{\textsuperscript{1}}{}, which is particularly problematic in mobile scenarios where bandwidth fluctuates with channel conditions and connectivity is frequently intermittent. To reduce this communication overhead, existing approaches typically rely on bit-level compression techniques such as quantization \cite{9611373,qiang2025deploying} and sparsification\cite{10256151}. \par
    \begin{figure}[t]
        \centering
        \includegraphics[width=0.7\linewidth]{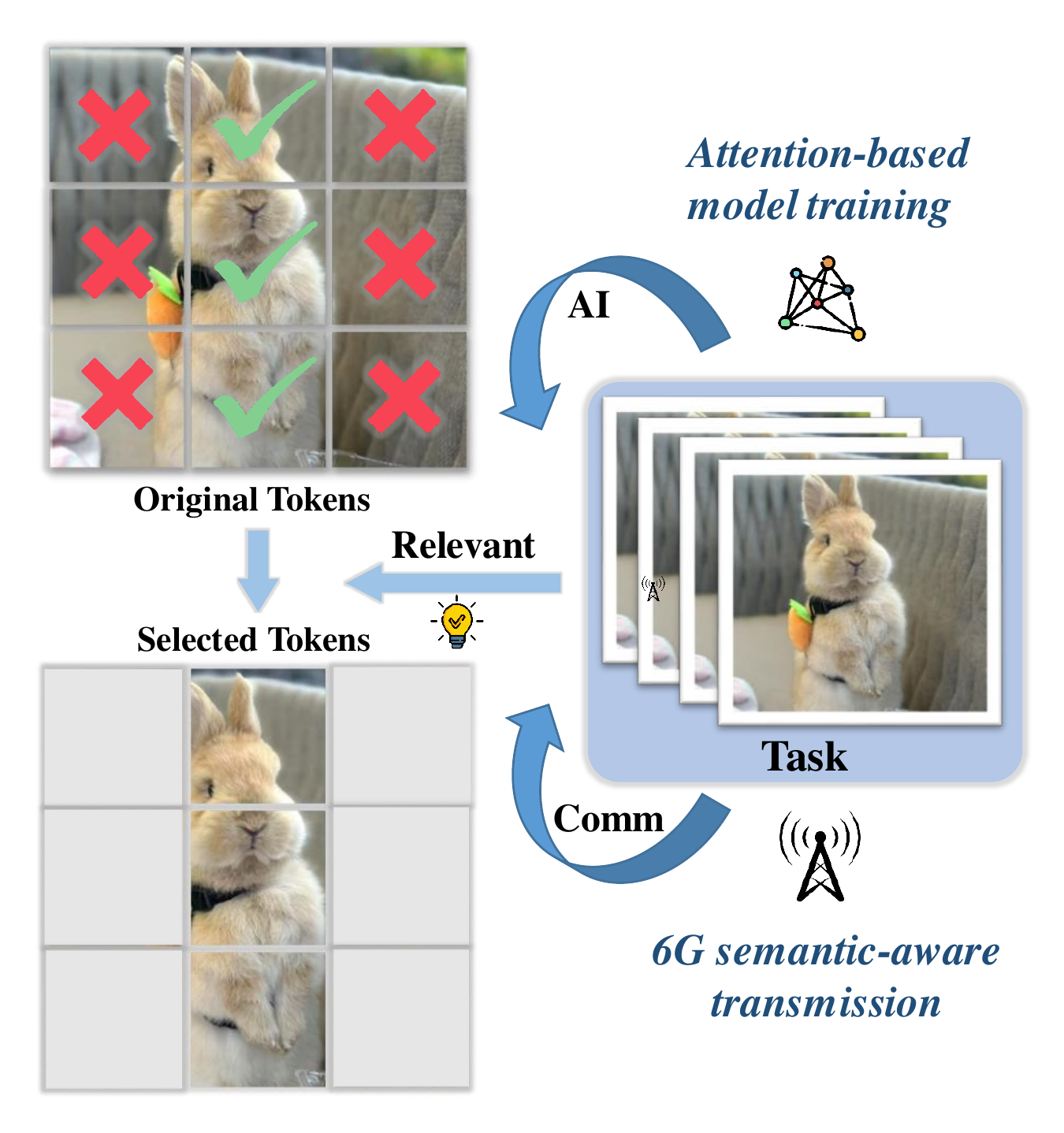}
        \caption{Attention-based semantic-aware token selection for 6G. The attention mechanism selectively transmits the most informative tokens.}
        \label{fig:task-related-tokens}
    \end{figure}
    \footnotetext[1]{Calculated based on a batch size of 64, a sequence length of 197 (14$\times$14 patches plus one class token), and a hidden dimension of 768 with 32-bit floating-point precision. The total size is $64 \times 197 \times 768 \times 4$ bytes $\approx$ 37 MB.}

    \begin{figure*}[t]
        \centering
        \includegraphics[width=1.0\linewidth]{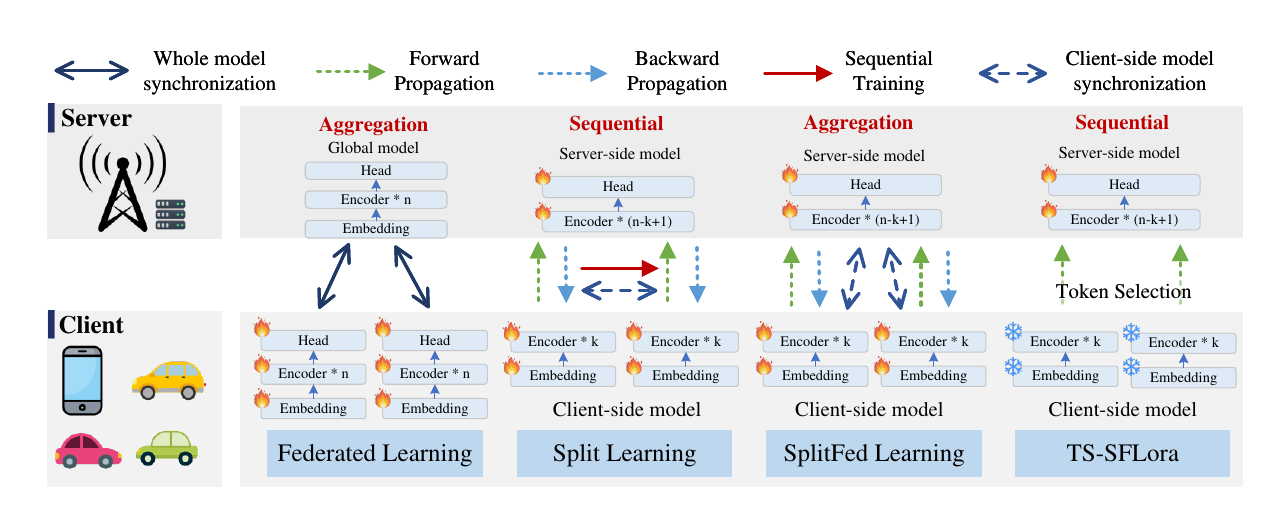}
        \caption{Comparison of Distributed Learning Architectures.}
        \label{fig:architecturecompare}
    \end{figure*}

    As wireless networks advance toward 6G, communication paradigms are shifting from bit-level transmission to task-oriented semantic transmission, where only information relevant to task performance is conveyed and redundant content is suppressed\cite{10054381}. This evolution aligns naturally with attention mechanism of Transformer-based SL/SFL, in which activations are inherently represented as sequences of semantic tokens rather than raw bits. As shown schematically in Fig.~\ref{fig:task-related-tokens}, the attention maps can highlight task-relevant patches and ignore background regions, so that only the most informative tokens need to be sent over the 6G link. These tokens contribute unequally to the final prediction \cite{zeng2022not}. This leads to a natural question: \textit{can we compress activations at the semantic token level rather than at the bit level?}\par
    Although semantic-aware token-level compression holds promise, applying it to large-scale SFL in MEC environments presents three fundamental challenges:
    \begin{itemize}
        \item Unlike FL, which exchanges gradients or parameter updates at the bit level, SL/SFL transmits intermediate activations that, in Transformer models, are naturally tokenized and semantically structured. This enables semantic cues to be extracted at the token level rather than treating activations as raw bitstreams. The challenge is to design an SFL architecture that exposes these semantic cues directly from the backbone while keeping clients lightweight and avoiding additional semantic encoders, scoring networks, or auxiliary modules.
        \item Since not all tokens contribute equally to the downstream task, transmitting full activation sequences is inefficient. The challenge is to design a principled selection mechanism that leverages the backbone’s attention scores to identify and retain task-relevant tokens, while remaining robust under heterogeneous data distributions and mobile conditions.
        \item Token selection introduces a trade-off between semantic information retention and communication overhead. The challenge is to formulate a semantic-aware metric and an optimization problem that balance this trade-off by jointly determining the number of transmitted tokens and the uplink bandwidth and transmit power allocation, under latency and energy constraints.
    \end{itemize}

    To tackle these challenges, we propose ST-SFLora, a semantic token-based split federated LoRA fine-tuning framework with three key designs. First, we adopt a one-way uplink SFL architecture in which clients only run lightweight forward passes and upload selected token activations, while avoiding gradient feedback and any extra semantic/scoring modules on devices; this keeps the client logic simple, improves scalability, and enables elastic participation under unreliable mobile connectivity. Second, we exploit the Transformer’s native attention signals to build a semantic-aware token selection strategy, reusing attention weights to identify the most informative tokens without modifying the backbone or introducing noticeable computational overhead. Finally, we introduce a system-level metric, Semantic Transmission Efficiency (STE), to quantify the trade-off between semantic preservation and communication cost under token compression, and use it to couple token selection with bandwidth and power allocation for unified communication resource optimization in ST-SFLora.\par

    To the best of our knowledge, this work represents the first attempt to integrate semantic-aware token selection with split federated LoRA fine-tuning for edge intelligence. The main contributions of this paper are summarized as follows:
    
    \begin{itemize}
        \item We propose ST-SFLora, a decoupled training framework where clients perform only lightweight forward propagation to extract activation tokens, while all trainable LoRA parameters reside on the server. A key design is to select task-relevant tokens before uplink transmission to balance semantic retention and communication cost, leading to a smaller activation payload in mobile scenarios.
        
        \item We define a novel metric, Semantic Transmission Efficiency (STE), to fundamentally quantify the system-level trade-off between semantic information retention and communication latency. Unlike traditional throughput metrics, STE explicitly characterizes the effective semantic information delivered per unit of time.
        
        \item We formulate a joint optimization problem that coordinates token selection, uplink transmit power, and bandwidth allocation under strict latency and energy constraints. To tackle the resulting mixed-integer nonconvex program, we develop an efficient alternating optimization algorithm.
    \end{itemize}
    
    The remainder of this paper is organized as follows. Section \ref{section2} reviews the background and related work. Section \ref{section3} presents the workflow and delay–energy analysis of ST-SFLora. Section~\ref{task-relevant-token-selection} presents the client selection and token selection strategies. Section \ref{section6} presents the new metric STE and formulates the optimization problem, and Section \ref{section7} provides its solution. Section \ref{section8} reports the performance evaluation. Section \ref{section9} concludes the paper.

    \section{Related Works} 
    \label{section2}
    FL enables collaborative model training without sharing raw data by aggregating client-side updates on a central server \cite{mcmahan2017communication, cai2023efficient}. Although effective for privacy preservation, FL becomes challenging to apply to large Transformer-based models because local fine-tuning demands substantial computation and memory on heterogeneous mobile devices. SL \cite{vepakomma2018split} addresses device constraints through sequential collaborative training: each client computes the early layers of a shared model, uploads the resulting activations, and receives gradients from the server before the next client proceeds. SFL \cite{thapa2022splitfed} combines the model-partitioning idea of SL with the aggregation mechanism of FL, enabling clients to execute the early layers in parallel and upload activations, while the server completes the remaining forward–backward computation and aggregates updates across clients. The workflows of these representative paradigms are summarized in Fig.~\ref{fig:architecturecompare}.\par
    Recent works have explored adaptive and resource-aware variants of SFL. HSFL \cite{9923620} integrates the parallel update mechanism of FL with the model-splitting structure of SL and incorporates a MAB-based user selection strategy to improve accuracy under heterogeneous UAV networks. CPSL \cite{wu2023split} proposes a parallel-then-sequential split learning scheme that jointly optimizes cut-layer selection, device clustering, and spectrum allocation to reduce training latency in wireless networks. In vehicular edge intelligence, ASFV \cite{10714368} dynamically selects cut layers based on mobility and resource availability. SFL has also been extended to large models: SplitLoRA \cite{lin2024splitlora} integrates LoRA fine-tuning into SFL to reduce trainable parameters, while SplitFrozen~\cite{ma2025splitfrozen} freezes most client-side layers and also keeps part of the server-side backbone frozen, while only updating lightweight auxiliary modules to accommodate device heterogeneity and accelerate LLM fine-tuning.\par
    Communication-efficient training has been widely studied in distributed learning. Bit-level compression techniques such as gradient quantization~\cite{9611373,9277666,liu2023communication,10038639,10621361,10542529} and sparsification~\cite{10256151,10026255} reduce gradient communication in FL. In SL and SFL, activation compression \cite{qiang2025deploying,10791300,zheng2023reducing} has received increasing attention, as transmitting intermediate features often dominates uplink costs. As wireless networks evolve toward 6G, task-oriented semantic communication \cite{10054381,10538233} promotes transmitting only task-relevant information. However, existing SL/SFL frameworks still rely on bit-level compression and have not exploited the semantic structure of Transformer activations—whose representation is naturally tokenized and semantically structured—for selective activation transmission or communication optimization under mobility-induced channel variation.\par
    Existing SL and SFL frameworks struggle to support efficient fine-tuning of large Transformer models on mobile edge devices, due to heavy communication overhead and the need for bidirectional exchanges of large activations. Prior compression methods operate at the bit level and ignore the token-level semantic structure of Transformer activations. In mobile scenarios, mobility-induced channel variations further worsen the communication bottleneck, but are rarely considered together with wireless resource constraints and semantic relevance. These limitations motivate the proposed ST-SFLora framework, which (i) reduces client–server interactions, (ii) leverages token-level semantics to guide activation transmission, and (iii) jointly optimizes token selection and wireless communication resources.

    \section{ST-SFLora: Overview and Delay-Energy Analysis}
    \label{section3}
    In this section, we first present the system overview of ST-SFLora, and then describe its training workflow along with the corresponding delay-energy analysis.
    
    \subsection{Architecture Overview}
    The proposed ST-SFLora framework is tailored for collaborative fine-tuning over MEC networks. The system comprises an edge server and a set of edge devices $\mathcal{M}=\{1,2,\dots,M\}$. Each device $m\in\mathcal{M}$ stores a private dataset $\mathcal{D}_m=\{(\mathcal{X}_m,\mathcal{Y}_m)\}$ with $D_m$ labeled samples. Training proceeds over communication rounds $t\in\{1,2,\dots,T\}$. Due to limited wireless bandwidth and time-varying channels, only a subset of clients $\mathcal{U}_t\subseteq\mathcal{M}$ participates in round $t$. The global model is partitioned at cut layer $e$: shallow layers run on the device, while the remaining layers are hosted on the edge server.

	\subsection{Workflow and Delay-Energy Analysis} The overall workflow of the proposed ST-SFLora framework consists of six phases, as illustrated in Fig.~\ref{fig:workflow}. The training procedure is summarized in Alg.~\ref{algo1}. The detailed process of each phase is described as follows.
	\begin{figure}[t]
		\centering
		\includegraphics[width=1.0\linewidth]{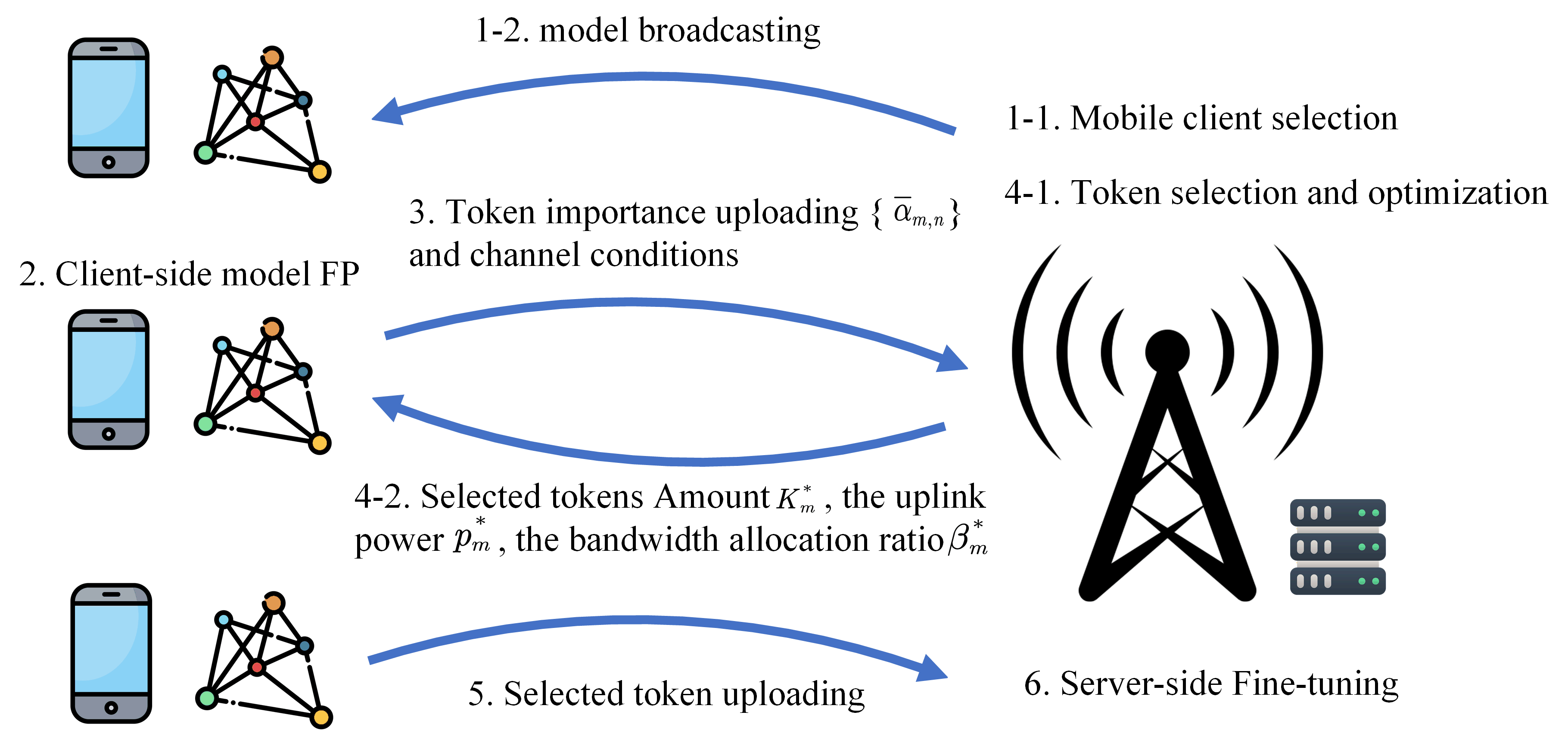}
		\caption{The workflow of the ST-SFLora system.}
		\label{fig:workflow}
	\end{figure} 

	\subsubsection{Client Selection and Model Broadcasting} At the beginning of each global round~$t$, the edge server collects the channel state information (CSI) and status of devices within its communication range and performs mobility-aware client selection. The resulting active client set is denoted by $\mathcal{U}_t \subseteq \mathcal{M}$, and the server broadcasts the pre-trained client-side model parameters $\boldsymbol{\omega}^{c}$ to all clients $m \in \mathcal{U}_t$.\par
    The control signaling for client selection is negligible compared with the transmission of activation tokens and is therefore ignored. The downlink broadcasting rate is $R^{DL} = W_{\mathrm{tot}} \log_2\!\left(1 + \frac{p_s h_{\min}}{N_0 W_{\mathrm{tot}}}\right)$, where $h_{\min} = \min_{m\in\mathcal{M}} h_m$ represents the weakest channel gain among all participating clients, $p_s$ is the downlink power of sever, $W_{\mathrm{tot}}$ is the overall bandwidth. Accordingly, the broadcasting delay for a model of size $s_0$ bits can be expressed as
    	\begin{equation}
    		T^{DL} = \frac{s_0}{R^{DL}}.
    	\end{equation}

    \subsubsection{Client-Side Feature Extraction and Tokenization}
       Each participating client $m \in \mathcal{U}_t$ performs forward propagation to extract local feature representations using the client-side model. Specifically, given a local mini-batch $\mathcal{B}_{m}=\{(x_b,y_b)\}_{b=1}^{B}$ sampled from the private dataset $\mathcal{D}_{m}$, the client computes the intermediate activations as: $A_{m}=\ell\!\big(\boldsymbol{\omega}^{c},\,\mathcal{B}_{m}\big)$, where $\ell(\cdot)$ denotes the local forward function that maps the input mini-batch $\mathcal{B}_{m}$ to a sequence of tokens $A_{m}\in\mathbb{R}^{B\times (N+1)\times D}$.\texorpdfstring{\textsuperscript{2}}{}\par
        Since all clients execute this phase in parallel, we analyze the local computation delay for an individual client. Let $\gamma^{F}_{c} =\sum_{j=1}^{e}\rho_{j}$ represent the total computational complexity (in FLOPs) of the client-side model per sample, where $\rho_{j}$ denotes the workload of the $j$-th layer and $e$ is the total number of client-side layers. Consequently, the computation latency for client $m$ to process a mini-batch of size $B$ is formulated as:
            \begin{equation}
                T^{F}_{m}=\frac{B\,\gamma^{F}_{c}}{f_{m}\,C_{m}\,D^{U}_{m}},
            \end{equation}
        where $f_{m}$ is the GPU clock frequency of client $m$, $C_{m}$ is the number of computing cores, and $D^{U}_{m}$ represents the floating-point operations per cycle per core.
    
         \footnotetext[2]{Given an image $I \in \mathbb{R}^{H \times W \times C}$, the patch-embedding layer first splits $I$ into $N$ non-overlapping patches and flattens them into $I_p \in \mathbb{R}^{N \times (P^2C)}$, where $(H, W)$ represents the height and width of the input image, $(P, P)$ is the resolution of each image patch, $C$ denotes the number of channels, and $N = HW / P^2$ is the number of image tokens. The flattened patches $I_p$ are then mapped to $A^0 = [a_1, a_2, \ldots, a_N] \in \mathbb{R}^{N \times D}$ using a trainable linear projection. Finally, a learnable \texttt{\texttt{[CLS]}} token $a_0$ is prepended to form $[a_0, a_1, \ldots, a_N] \in \mathbb{R}^{(N+1)\times D}$, which serves as the input to the Transformer layers.}

	\begin{figure}[t]
		\centering
		\includegraphics[width=0.82\linewidth]{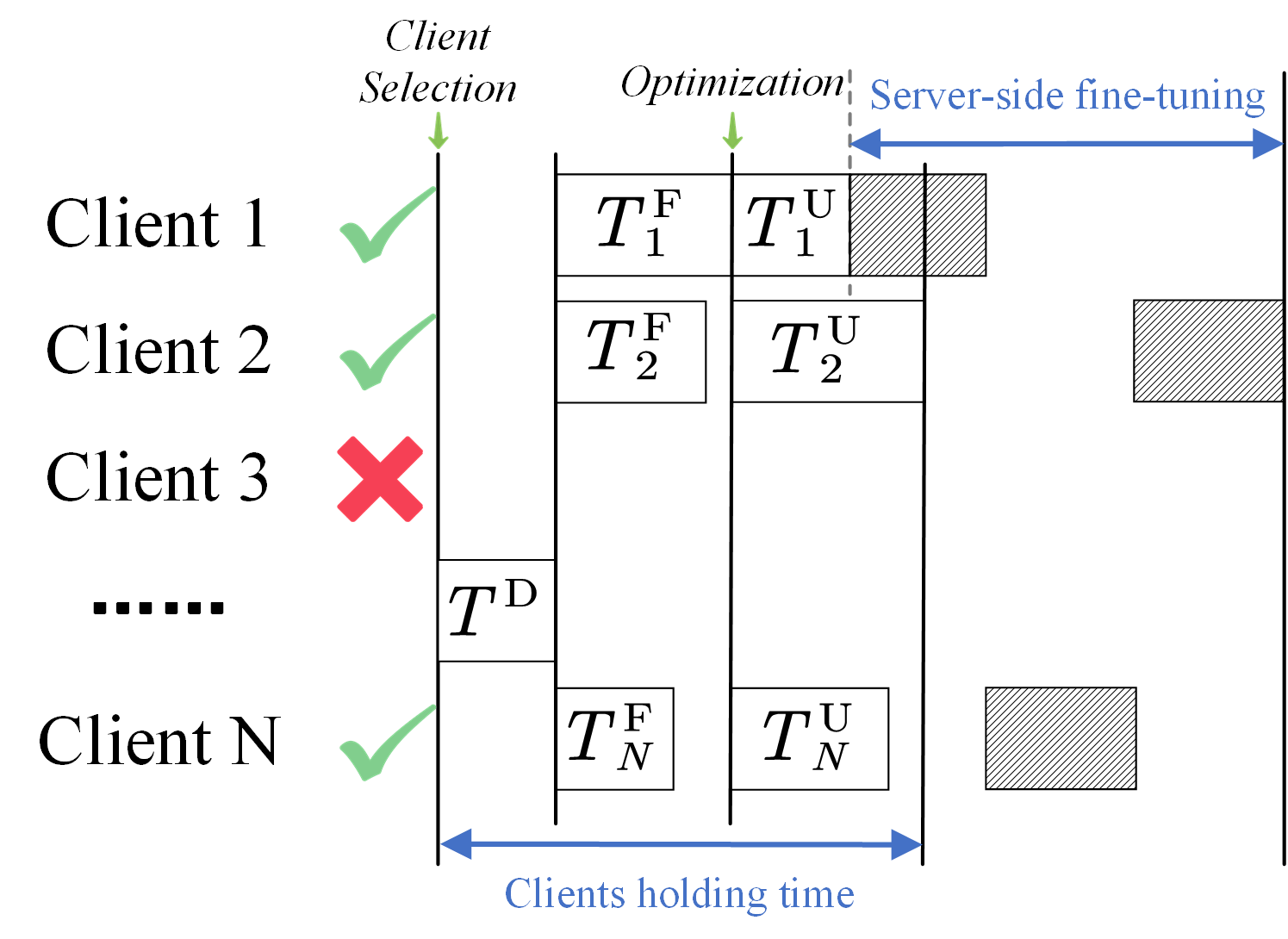}
		\caption{The timeline of ST-SFLora.}
		\label{fig:timeline}
	\end{figure} 
    \subsubsection{Token Importance Uploading} For each sample in the mini-batch, the final client-side Transformer block outputs a set of tokens accompanied by attention scores that indicate their relative semantic importance. To characterize the global attention distribution, the client sorts the tokens of each sample in descending order of importance and sums the attention values rank-wise across the batch. This operation yields a batch-level token importance vector $\bar{\boldsymbol{\alpha}}_{m} = [\bar{\alpha}_{m,1}, \bar{\alpha}_{m,2}, \ldots, \bar{\alpha}_{m,N}]$. Subsequently, each client transmits this batch-level token importance vector to the server. Since these vectors consist of lightweight scalar quantities, the transmission latency and energy consumption of this phase are considered negligible and are ignored in the subsequent analysis.

    \subsubsection{Optimization}Utilizing the received CSI and the batch-level token importance vectors, the server performs joint optimization of the token budget $\boldsymbol{K}$, uplink transmit power $\boldsymbol{p}$, and bandwidth allocation $\boldsymbol{W}$. Upon solving the optimization problem, the server distributes the optimal parameters $\{\boldsymbol{K}^*, \boldsymbol{p}^*, \boldsymbol{W}^*\}$ back to the selected clients. Similar to the previous phase, the computational time for this optimization and the downlink latency for broadcasting these control parameters are omitted from the delay-energy analysis, as they are minimal compared to the training and activation transmission costs.
	
    \subsubsection{Selected Token Uploading} Upon receiving the optimized parameters, each selected client applies the semantic-aware token selection strategy based on the assigned token budget $K_m^*$ and communication parameters $(W_m^*, p_m^*)$. Only the refined activations $A_m^{\mathrm{ref}} \in \mathbb{R}^{B \times (K_m^*+2) \times D}$ are transmitted to the server over the allocated bandwidth. This selective transmission effectively reduces communication overhead while preserving sufficient semantic information for server-side fine-tuning.\par
    To upload the selected tokens, the total bandwidth $W_{\mathrm{tot}}$ is partitioned among the active clients. Let $W_m$ denote the bandwidth allocated to client $m$ in round $t$, where $\sum_{m\in\mathcal{U}_t}W_m \leq W_{\mathrm{tot}}$. The achievable uplink transmission rate for client $m$ is given by:
        \begin{equation}
            R_{m}^{U L}= W_m \log_2 \left(1+\frac{p_{m}^* h_m}{N_0 W_m}\right),
        \end{equation}
    where $p_{m}^*$ is the optimized transmit power, $h_m$ is the channel gain, and $N_0$ is the background noise power spectral density. The data size of the uploaded tokens for client $m$ depends on the batch size $B$, the number of selected tokens $K_m^*$, and the embedding dimension $D$. It is calculated as:
        \begin{align}
            S_m &= B \times (K_{m}^*+2) \times D \times q_0 \quad (\mathrm{bits}),
            \label{eq:s_mt}
        \end{align}
    where $q_0$ denotes the bit-width per element (e.g., $q_0=32$ for single-precision floating-point without quantization). Consequently, the uplink communication latency and energy consumption are formulated as:
        \begin{align}
            T_{m}^{\mathrm{U}} &= \frac{S_{m}}{R_{m}^{UL}}, \quad E_{m}^{\mathrm{U}} = p_{m}^* T_{m}^{\mathrm{U}}.
        \end{align}
   
    \subsubsection{Server-Side LoRA-Based Fine-Tuning} 
        Upon receiving the refined activations from $\mathcal{U}_t$, the server sequentially fine-tunes the LoRA adapter $\Delta \boldsymbol{\omega}^{s}$ while keeping the backbone frozen. For each client $m \in \mathcal{U}_t$, the uploaded $A_m^{\mathrm{ref}}$ is used to compute the loss $L$ and update the parameters as:
            \begin{equation}
                \Delta \boldsymbol{\omega}_{m,t}^{s} \leftarrow 
                \Delta \boldsymbol{\omega}_{m-1,t}^{s} - \eta \, \nabla L,
            \end{equation}
        where $\Delta \boldsymbol{\omega}_{0,t}^{s}= \Delta \boldsymbol{\omega}_{t-1}^{s}$ inherits the global state from the previous round. This phase finalizes the learning for round $t$. Given the server's sufficient computational capacity, the associated latency and energy consumption are omitted from our analysis.
        	\begin{algorithm}[t]
                \small
        		\caption{Semantic Token-based Split Federated LoRA Fine-tuning (ST-SFLora)}
        		\label{algo1}
        		\begin{algorithmic}[1]
        			\STATE Initialize the split layer $e$ and LoRA adapter parameters.
        			\FOR{each global round $t = 1, 2, \dots, T$}
        			\STATE Server broadcasts the embedding layer and first $e$ Transformer blocks to all clients $\mathcal{U}_t$.
        			\FOR{each client $m \in \mathcal{U}_t$ in parallel}
        			\FOR{each local epoch $i = 1, 2, \dots, I$}
        			\STATE {client-side:}
        			\STATE Sample a mini-batch from local dataset $\mathcal{D}_m$.
        			\STATE Run forward propagation to obtain activations/tokens and importance scores.
        			\STATE Upload token importance, CSI, and mobility information.
        			
        			\STATE {Server-side:}
        			\STATE Collect feedback from all accessible clients.
        			\STATE Perform online joint optimization to determine $\{K_m^*, W_m^*, p_m^*\}$ parameter setting.
        			
        			\STATE {Client-side:}
        			\STATE Each selected client $m \in \mathcal{U}_t$ transmits selected refined tokens $\mathbf{A}_m^{\mathrm{ref}}$ to the server.
        			
        			\STATE {Server-side:}
        			\STATE Receive tokens and fine-tune the server-side model via LoRA-based updates.
        			\ENDFOR
        			\ENDFOR
        			\ENDFOR
        		\end{algorithmic}
        	\end{algorithm}
            
    \section{Client Selection and Token Selection}\label{task-relevant-token-selection}
    In this section, we first introduce a mobility-aware client selection scheme to ensure reliable participation within the server coverage. We then propose a semantic-aware token selection approach that combines attention-driven selection with token merging.
    \subsection{Mobility-aware Client Selection}
    Prior to the commencement of local training, the server determines which clients are eligible to participate in the current round. Due to the limited communication range of the edge server, high-mobility clients risk exiting the coverage area during the training process, leading to potential transmission failures. Consequently, not all clients can be selected in each round, and a mobility-aware selection strategy is required to ensure stable participation and reliable model updates.\par
    We define the standing time as the duration a client remains within the server's effective communication range. Let $v_m$ denote the velocity of client $m$, which is assumed constant during one round. Let $L$ represent the coverage radius of the server, and $l_m$ denote the current radial distance of client $m$ from the server. Thus, the remaining distance to the coverage boundary is $L - l_m$. Given the maximum latency constraint per iteration $\bar{t}$, the effective standing time of client $m$ is defined as:
        \begin{equation}
            T_m^{\mathrm{standing}} = \min\!\left\{\frac{L - l_m}{v_m}, \, \bar{t}\right\}.
        \end{equation}
    To ensure successful completion, the client's standing time must exceed the total time required for interaction, denoted as the holding time. As illustrated in Fig.~\ref{fig:timeline}, holding time is calculated as:
        \begin{equation}
            T_m^{\mathrm{holding}} = T_m^0 + T_m^{\mathrm{U}},
        \end{equation}
    where $T_m^0 = T^{\mathrm{D}} + T_m^{\mathrm{F}}$ represents the sum of the downlink broadcasting delay and the local forward computation time, while $T_m^{\mathrm{U}}$ denotes the uplink transmission time. A binary client selection indicator $\theta_m^t$ is then defined as:
        \begin{equation}
            \theta_m^t =
            \begin{cases}
                1, & \mathrm{if\quad} T_m^{\mathrm{holding}} \leq T_m^{\mathrm{standing}}, \\
                0, & \mathrm{otherwise},
            \end{cases}
        \end{equation}
    where $\theta_m^t = 1$ indicates that client $m$ can complete the training task within the coverage area and is selected for this round, while $\theta_m^t = 0$ indicates exclusion due to the risk of disconnection. Accordingly, the selected client set is given by:
        \begin{equation}
            \mathcal{U}_t = \{\, m \in \mathcal{M} \mid \theta_m^t = 1 \,\}.
        \end{equation}
        
    \subsection{Semantic-aware Token Selection Strategy}
        In this subsection, we present a semantic-aware token selection strategy that exploits the self-attention mechanism of Transformer architectures to suppress redundant tokens while preserving task-relevant context.
        \subsubsection{Token Selection}
        In Vision Transformers, the \texttt{\texttt{[CLS]}} token serves as a global feature aggregator, attending to all patch tokens to synthesize task-discriminative representations. Consequently, the attention weights assigned by the \texttt{\texttt{[CLS]}} token serve as a reliable proxy for task relevance. Tokens with higher attention scores typically contribute more to the final prediction, whereas those with lower scores are often redundant background features.\par
        For each client $m$, let $\mathbf{A}_m \in \mathbb{R}^{B \times (N+1) \times D}$ denoted the intermediate activation tensor obtained from client-side forward propagation, where $B$ denotes the batch size, $N$ the number of patch tokens, and $D$ the embedding dimension. The token sequence of the $b$-th image is represented as $\mathbf{A}_m^{(b)} = [\mathbf{a}_{m,0}^{(b)}, \mathbf{a}_{m,1}^{(b)}, \ldots, \mathbf{a}_{m,N}^{(b)}]$, where $\mathbf{a}_{m,0}^{(b)}$ is the \texttt{\texttt{[CLS]}} token and $\mathbf{a}_{m,n}^{(b)}$ ($n=1,\ldots,N$) are patch tokens. In the final client-side Transformer block, the self-attention map is computed as
        	\begin{equation}
        		\mathbf{Att}_m^{(b)} = 
        		\mathrm{softmax}\!\left(\frac{\mathbf{Q}_m^{(b)}\mathbf{K}_m^{(b)\top}}{\sqrt{D}}\right),
        	\end{equation}
        where $\mathbf{Q}_m^{(b)}$ and $\mathbf{K}_m^{(b)}$ denote the query and key matrices derived from $\mathbf{A}_m^{(b)}$. The attention score between the \texttt{\texttt{[CLS]}} token and the $n$-th patch token is calculated as
        	\begin{equation}
        		\alpha_{m,n}^{(b)} =
        		\frac{\exp\!\left(\mathbf{q}_{m,0}^{(b)} \!\cdot\! \mathbf{k}_{m,n}^{(b)}\right)}
        		{\sum_{j=1}^{N} \exp\!\left(\mathbf{q}_{m,0}^{(b)} \!\cdot\! \mathbf{k}_{m,j}^{(b)}\right)},
        		\label{alpha_definition}
        	\end{equation}
        where $\mathbf{q}_{m,0}^{(b)}$ and $\mathbf{k}_{m,n}^{(b)}$ denote the query and key vectors corresponding to the \texttt{\texttt{[CLS]}} token and the $n$-th patch token, respectively. The score $\alpha_{m,n}^{(b)}$ quantifies the dependency of the \texttt{\texttt{[CLS]}} token on the $n$-th patch token, serving as an indicator of its semantic importance. Leveraging these scores, each client identifies and retains only the top-$K_m$ most informative tokens (excluding the \texttt{\texttt{[CLS]}} token):
        	\begin{equation}
        		\mathbf{A}_{m}^{\mathrm{sel}} = 
        		\big\{\,\mathbf{a}_{m,n}^{(b)} \,\big|\, n \!\in\! \mathrm{Top}\mathrm{-}K_m(\alpha_{m,1}^{(b)},\ldots,\alpha_{m,N}^{(b)}) \big\}_{b=1}^{B}.
        	\end{equation}
        The number of selected tokens $K_m$ is determined adaptively through the optimization process, thereby reducing redundant tokens and communication overhead while preserving task-relevant information essential for downstream fine-tuning.
	
        \subsubsection{Token Merging} While token selection efficiently filters redundancy, simply discarding low-scoring tokens may lead to the loss of subtle contextual information. To mitigate this, token merging aggregates the discarded tokens into a summary representation. For each image sample $b$ on client $m$, let $\mathcal{S}_m^{(b)}=\mathrm{Top}\text{-}K_m\!\left(\alpha_{m,1}^{(b)},\ldots,\alpha_{m,N}^{(b)}\right)$ and $\mathcal{I}_m^{(b)}=\{1,\ldots,N\}\setminus \mathcal{S}_m^{(b)}$. The discarded tokens are aggregated into a single merged token via an attention-weighted average:
            \begin{equation}
                \mathbf{a}_{m,\mathrm{merge}}^{(b)} =
                \frac{\sum_{n \in \mathcal{I}_m^{(b)}} \alpha_{m,n}^{(b)}\,
                \mathbf{a}_{m,n}^{(b)}}
                {\sum_{n \in \mathcal{I}_m^{(b)}} \alpha_{m,n}^{(b)}}.
            \end{equation}
        Finally, the refined token sequence is constructed by concatenating the \texttt{\texttt{[CLS]}} token, the selected tokens, and the merged token. The overall refined batch representation is expressed as
            \begin{equation}
                \mathbf{A}_m^{\mathrm{ref}} =
                \Big[\,\mathbf{a}_{m,0}^{(b)},\; \{\mathbf{a}_{m,n}^{(b)}\}_{n\in\mathcal{S}_m^{(b)}},\; \mathbf{a}_{m,\mathrm{merge}}^{(b)}\,\Big]_{b=1}^{B},
            \end{equation}
        where $\mathbf{A}_m^{\mathrm{ref}} \in \mathbb{R}^{B \times (K_m + 2) \times D}$. By condensing the discarded tokens rather than removing them, the model preserves global context while substantially reducing the transmission load to the server.
        	\begin{figure}[t]
        		\centering
        		\includegraphics[width=0.8\linewidth]{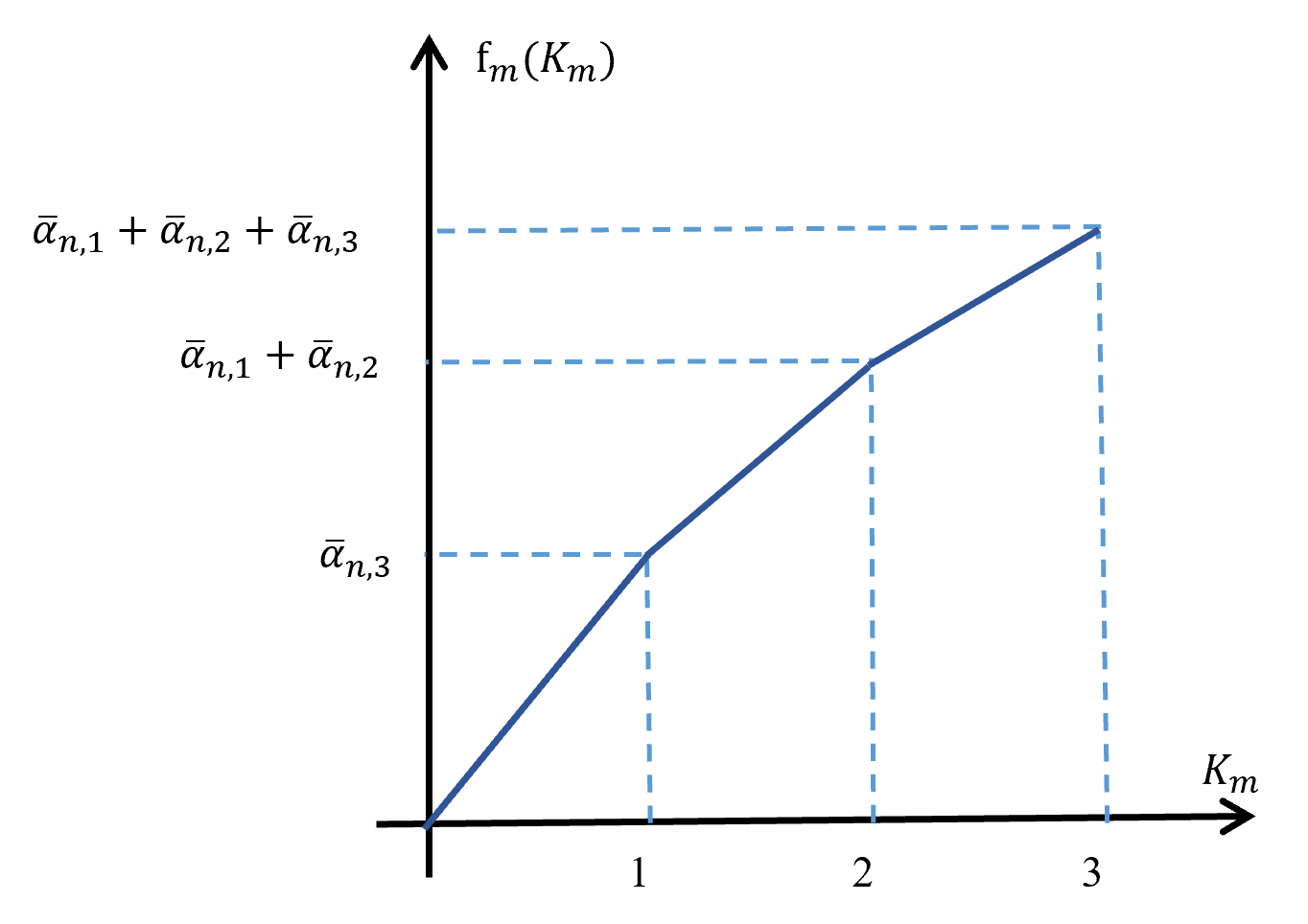}
        		\caption{The cumulative semantic information retention function $f_m(K_m)$.}
        		\label{fig:f_m}
        	\end{figure}

    \section{Problem Formulation}
    \label{section6}
    In this section, we begin by defining a new metric to explicitly quantify the trade-off between semantic information retention and communication latency. Building upon this metric, we then construct the joint optimization problem to determine the optimal token-level transmission and communication resource allocation.

	\subsection{New Metric: Semantic Transmission Efficiency } For each client $m$, the final Transformer block of the client-side model produces a sequence of tokens for each image sample $b$ in the mini-batch:
    	\begin{equation}
    		\mathbf{A}_m^{(b)} = [\mathbf{a}_{m,0}^{(b)}, \mathbf{a}_{m,1}^{(b)}, \ldots, \mathbf{a}_{m,N}^{(b)}] \in \mathbb{R}^{(N+1) \times D},
    	\end{equation}
	where $\mathbf{a}_{m,0}^{(b)}$ denotes the \texttt{\texttt{[CLS]}} token and $\{\mathbf{a}_{m,n}^{(b)}\}_{n=1}^{N}$ are patch tokens. The attention score $\alpha_{m,n}^{(b)}$ measures the task relevance of each token, as defined in (\ref{alpha_definition}).Sorting these scores in descending order yields
    	\begin{equation}
    		\alpha_{m,1}^{(b)} \ge \alpha_{m,2}^{(b)} \ge \cdots \ge \alpha_{m,N}^{(b)}.
    	\end{equation}
	Since resource optimization is performed before each communication round, which transmits a complete mini-batch, a unified token budget $K_m$ must be determined for the entire batch. To maximize the total semantic information retained in each round, we therefore compute a batch-level token importance:
    	\begin{equation}
    		\bar{\alpha}_{m,n} = \sum_{b=1}^{B} \alpha_{m,n}^{(b)}, \quad
    		\bar{\boldsymbol{\alpha}}_m = [\bar{\alpha}_{m,1}, \bar{\alpha}_{m,2}, \ldots, \bar{\alpha}_{m,N}],
    	\end{equation}
	where $\bar{\alpha}_{m,n}$ represents the total semantic contribution of the $n$-th ranked token summed across the mini-batch. The cumulative semantic information retention preserved by selecting the top-$K_m$ tokens is then expressed as:
    	\begin{equation}
    		f_m(K_m) = \sum_{n=1}^{K_m} \bar{\alpha}_{m,n},
    		\label{eq:coverage_single}
    	\end{equation}
	which quantifies the total semantic information retained after filtering. As illustrated in Fig. \ref{fig:f_m}, this function $f_m(K_m)$ exhibits concavity, as formally proven in Lemma \ref{lemma1}.
        \begin{myLem}
            \label{lemma1}
            Let $\{\alpha_{m,n}^{(b)}\}_{n=1}^{N}$ denote the normalized attention scores of the $b$-th sample for client $m$, sorted such that $\alpha_{m,1}^{(b)} \ge \cdots \ge \alpha_{m,N}^{(b)}$. 
            Then, the cumulative semantic information retention function $f_m(K_m)$ defined in (\ref{eq:coverage_single}) is monotonically increasing and concave with respect to $K_m$.
        \end{myLem}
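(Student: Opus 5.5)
The argument works directly on the discrete function $f_m$ defined in (\ref{eq:coverage_single}) on $\{0,1,\dots,N\}$, with $f_m(0)=0$. Monotonicity and concavity of such a function are properties of its forward differences, so I will compute
\[
\Delta f_m(K) = f_m(K+1)-f_m(K) = \bar{\alpha}_{m,K+1}, \qquad K=0,\dots,N-1,
\]
and then check two things: every difference is nonnegative, and the differences are nonincreasing in $K$. The second property is the standard discrete notion of concavity. It is also equivalent to the piecewise-linear interpolation of $f_m$ being concave on $[0,N]$, which is the form the later continuous relaxation of $K_m$ will need.

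\textbf{Monotonicity.} Each $\alpha_{m,n}^{(b)}$ in (\ref{alpha_definition}) is a ratio of exponentials, so it is strictly positive. Summing over $b$ gives $\bar{\alpha}_{m,n}>0$. Hence $\Delta f_m(K)>0$ for all $K$, and $f_m$ is strictly increasing. Since the softmax normalization makes $\sum_{n}\alpha_{m,n}^{(b)}=1$, we also get the bound $f_m(N)=B$, which I will record for later use.

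\textbf{Concavity.} This is the step that needs care, because of what index $n$ means here. The ordering is done per sample: $\alpha_{m,n}^{(b)}$ is the $n$-th largest score of sample $b$, not the score of a fixed patch position. The batch-level vector is formed rank-wise, as described in the token-importance uploading phase. With that reading, $\alpha_{m,n}^{(b)} \ge \alpha_{m,n+1}^{(b)}$ holds for every $b$ by hypothesis. Summing these $B$ inequalities gives
\[
\bar{\alpha}_{m,n} \ge \bar{\alpha}_{m,n+1}, \qquad n=1,\dots,N-1.
\]
So the increments $\Delta f_m(K)$ are nonincreasing, and equivalently
\[
f_m(K+1)-2f_m(K)+f_m(K-1) = \bar{\alpha}_{m,K+1}-\bar{\alpha}_{m,K} \le 0 .
\]
I will state explicitly that the rank-wise summation is what makes this work: had the sum been taken over fixed patch positions, the sorted order would not be preserved and concavity could fail.

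\textbf{Continuous extension.} Finally, I extend $f_m$ to real $K_m\in[0,N]$ by linear interpolation. The result is the pointwise minimum of the affine pieces
\[
\ell_K(x) = f_m(K) + \bar{\alpha}_{m,K+1}\,(x-K), \qquad K=0,\dots,N-1 .
\]
Because the slopes $\bar{\alpha}_{m,K+1}$ are nonincreasing, each $\ell_K$ lies above $f_m$ at every integer point. This minimum representation then shows the extension is concave and nondecreasing on $[0,N]$, which completes the proof.
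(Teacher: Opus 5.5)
Your proof is correct and follows the same route as the paper: both compute the forward difference $f_m(K_m+1)-f_m(K_m)=\bar{\alpha}_{m,K_m+1}$, get monotonicity from its nonnegativity, and get discrete concavity from $\bar{\alpha}_{m,K_m+1}\le\bar{\alpha}_{m,K_m}$; your rank-wise summation argument makes explicit the ordering of the $\bar{\alpha}_{m,n}$ that the paper only asserts. The piecewise-linear extension is a harmless extra, since the paper never relaxes $K_m$: $\mathcal{SUBP}3$ uses only monotonicity to set $K_m^\star=K_m^{\max}$.
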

        
        \begin{proof}
            Since $\alpha_{m,n}^{(b)} \ge 0$ and the summed importance scores maintain the non-increasing order $\bar{\alpha}_{m,1} \ge \bar{\alpha}_{m,2} \ge \cdots \ge \bar{\alpha}_{m,N}$, 
            the first-order difference is $f_m(K_m+1)-f_m(K_m)=\bar{\alpha}_{m,K_m+1} \ge 0$, implying monotonicity.
            Furthermore, the inequality $\bar{\alpha}_{m,K_m+1} \le \bar{\alpha}_{m,K_m}$ indicates diminishing marginal gains, thereby establishing the concavity of $f_m(K_m)$ in the discrete domain.
        \end{proof}
	At the system level, the total semantic information retention is derived by summing the individual contributions $\sum_{m \in \mathcal{U}} f_m(K_m)$. To explicitly capture the trade-off between semantic information retention and communication latency, we define the STE as:
    	\begin{equation}
    		\mathcal{E} = \frac{\Delta}{T} = \frac{\sum_{m \in \mathcal{U}} f_m(K_m)}{\underset{m \in \mathcal{U}}{\max}\{T_m^{\mathrm{U}}\}},
    		\label{eq:EITR}
    	\end{equation}
    where the numerator quantifies the total semantic information retained by all participating clients, while the denominator represents the bottleneck communication latency per round, dominated by the straggler due to the synchronous aggregation requirement. As an efficiency metric, $\mathcal{E}$ measures the rate of effective semantic information delivery. A higher value of $\mathcal{E}$ indicates that the system achieves greater learning contribution under lower latency, thereby facilitating more efficient distributed training. As shown in Fig. \ref{fig:ite-tradeoff}, the STE curve exhibits a distinct peak. This validates the existence of an optimal trade-off, where the marginal information gain matches the marginal latency cost.
    	\begin{myRmk}
    		    The metric STE can be interpreted as the \textbf{system-level semantic throughput}. Unlike conventional throughput (measured in bits per second), STE quantifies the density of task-relevant utility delivered per unit of time. Physically, maximizing STE identifies the optimal operating point where the marginal gain of retaining additional tokens matches the marginal cost of the induced time delay from the bottleneck straggler. We use $\max_{m\in\mathcal{U}_t}{T_m^U}$ since STE is a system metric and the uplink phase is bounded by the last-finishing client.
    	\end{myRmk}
    	\begin{figure}[t]
    	    \centering
    	    \includegraphics[width=0.8\linewidth]{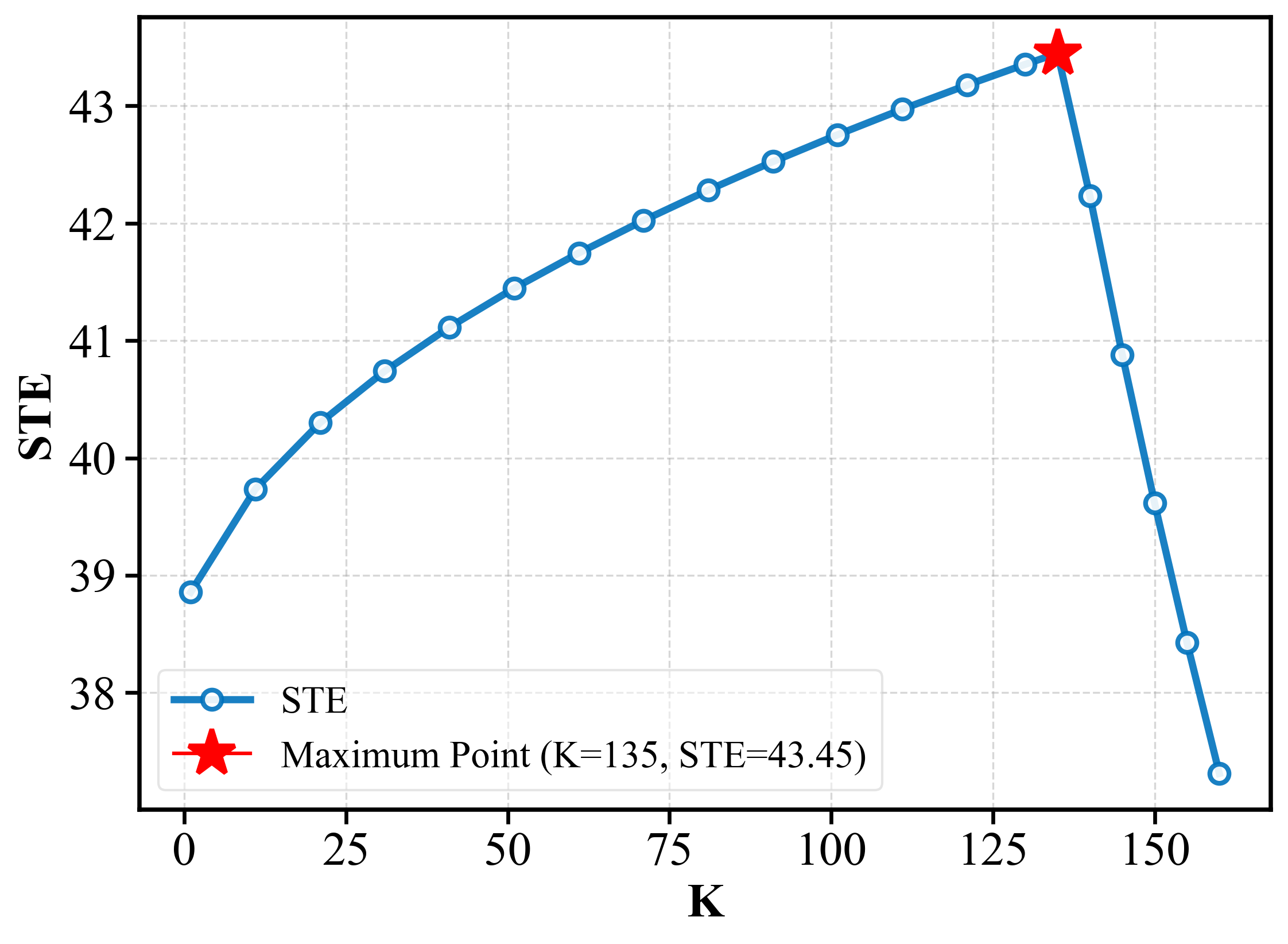}
    	    \caption{STE versus selected token number $K_m$, illustrating the existence of an optimal operating point.}
    	    \label{fig:ite-tradeoff}
    	\end{figure}
    	
	\subsection{Problem Formulation}
	Given the STE metric defined in \eqref{eq:EITR}, we formulate the system-level objective to jointly optimize the token selection $\boldsymbol{K}$, bandwidth allocation $\boldsymbol{W}$, and transmission power $\boldsymbol{p}$. Our goal is to maximize the semantic transmission efficiency in each communication round. This formulation mathematically encapsulates the trade-off between semantic information retention and communication latency:
    	\begin{subequations}
    		\begin{align}
    			\mathcal{P}0:\quad 
    			&\max_{\boldsymbol{K},\,\boldsymbol{W},\,\boldsymbol{p}}
    			\quad
    			\mathcal{E}=\frac{\sum_{m\in\mathcal{U}} f_m(K_m)}
    			{\underset{m\in\mathcal{U}}{\max}\, T_m^{\mathrm U}}
    			\label{eq:system_opt} \\[4pt]
    			\text{s.t.}\quad
    			&\text{C1:}\quad 0 \le p_m \le p^{\max},\quad \forall m\in\mathcal{U}, \label{q0-1}\\[2pt]
    			&\text{C2:}\quad \sum_{m\in\mathcal{U}} W_m \leq W_{\mathrm{tot}}, \label{q0-2}\\[2pt]
    			&\text{C3:}\quad W_m \ge 0,\quad \forall m\in\mathcal{U}, \label{q0-3}\\[2pt]
    			&\text{C4:}\quad K^{\min} \le K_m \le N,\quad K_m\in\mathbb{Z},\quad \forall m\in\mathcal{U}, \label{q0-4}\\[2pt]
    			&\text{C5:}\quad E_m^{\mathrm U}(K_m, W_m, p_m) \le E^{\max},\quad \forall m\in\mathcal{U}, \label{q0-5}\\[2pt]
    			&\text{C6:}\quad T_m^0 + T_m^{\mathrm U}(K_m, W_m, p_m) 
    			\le T_m^{\mathrm{standing}},\quad \forall m\in\mathcal{U}. \label{q0-6}
    		\end{align}
    	\end{subequations}
	Constraint C1 sets the range for transmission power, while Constraints C2 and C3 manage the total bandwidth allocation. These variables are continuous. Constraint C4 restricts the number of selected tokens to be an integer. Finally, Constraints C5 and C6 ensure that the energy consumption and total latency do not exceed the specific limits of each client.

    \section{Problem Solution}
    \label{section7}
	Problem $\mathcal{P}0$ in \eqref{eq:system_opt} jointly optimizes the token selection vector $\boldsymbol{K}$, bandwidth allocation $\boldsymbol{W}$, and transmit power $\boldsymbol{p}$. The objective is a fractional function with a denominator that contains a maximization operator, and the feasible set involves both continuous and integer variables. Hence, $\mathcal{P}0$ is a mixed-integer nonlinear program that is difficult to solve directly. To eliminate the maximization operator in the denominator, we introduce an auxiliary variable
	\begin{equation}
		\tau \triangleq \max_{m \in \mathcal{U}} T_m^{\mathrm{U}}(K_m, W_m, p_m),
	\end{equation}
	which can be enforced through the set of constraints
	\begin{equation}
		T_m^{\mathrm{U}}(K_m, W_m, p_m) \le \tau,\quad \forall m\in\mathcal{U}.
		\label{eq:tau_def}
	\end{equation}
	Using $\tau$, problem $\mathcal{P}0$ can be equivalently rewritten as
	\begin{subequations}
		\label{eq:P0_tau}
		\begin{align}
			\mathcal{P}0':\quad
			&\max_{\boldsymbol{K},\,\boldsymbol{W},\,\boldsymbol{p},\,\tau}
			\ \frac{\sum_{m \in \mathcal{U}} f_m(K_m)}{\tau} \label{eq:P0_tau_obj}\\
			\text{s.t.}\quad
			&\text{C1--C6 in }(\ref{q0-1})\text{--}(\ref{q0-6}), \nonumber\\
			&\text{C7:}\ T_m^{\mathrm{U}}(K_m, W_m, p_m) \le \tau,\ \forall m\in\mathcal{U}. \label{eq:P0_tau_C7}
		\end{align}
	\end{subequations}
	Although the maximization operator is removed, Problem $\mathcal{P}0'$ remains challenging due to the intricate coupling between the integer variable $\boldsymbol{K}$ and the continuous resource variables $(\boldsymbol{W}, \boldsymbol{p})$ in the constraints. To address this, we propose an iterative algorithm. Since $f_m(K_m) \ge 0$ holds for all $m\in\mathcal{U}$, maximizing the fractional objective is then equivalent to minimizing the worst-case latency $\tau$ under the resource, energy, and standing-time constraints. Although this transformation removes the explicit maximization operator from the denominator, problem $\mathcal{P}0'$ remains a nonconvex mixed-integer nonlinear program due to the coupling among $\boldsymbol{K}$, $\boldsymbol{W}$, and $\boldsymbol{p}$. To handle this difficulty, we decompose the original problem into three coupled subproblems, namely power control, bandwidth allocation, and token selection, and solve them iteratively in an alternating manner until convergence.
	\begin{myRmk} 
		Decomposing a highly coupled optimization problem into multiple subproblems and solving them iteratively is commonly adopted when the original function is difficult to optimize directly (see, e.g., \cite{hu2024energy,wu2023split,11045879}). The solutions can be efficiently computed but not necessarily optimal. 
	\end{myRmk}

    \begin{algorithm}[t]
        \caption{Optimal Power via Bisection for $\mathcal{SUBP}1$}
		\label{alg-bisection-new}
        \small
		\begin{algorithmic}[1]
			\REQUIRE Channel gain $\phi_m=\tfrac{h_m}{n_0 W_m}$, parameter $\kappa_m=\tfrac{f_m\ln2}{E^{\max} W_m}$, peak power $p^{\max}$, minimum power $p_m^{\min}$, tolerance $\varepsilon_p>0$.
			\ENSURE Optimal transmit power $p_m^{\star}$ (or infeasibility)
			\STATE \textbf{if} $\dfrac{p^{\max} S_m(K_m)}{W_m\log_2(1+\phi_m p^{\max})} \le E^{\max}$ \textbf{then}
			\STATE \quad \textbf{if} $p^{\max} \ge p_m^{\min}$ \textbf{then} set $p_m^{\star}=p^{\max}$ and \textbf{return};
			\STATE \quad \textbf{else} declare user $m$ infeasible and \textbf{return};
			\STATE \textbf{end if}
			\STATE \textbf{if} $\kappa_m \ge \phi_m$ \textbf{then} declare user $m$ infeasible and \textbf{return};
			\STATE Initialize $\ell=0$, $r=p^{\max}$;
			\REPEAT
			\STATE $p=(\ell+r)/2$;
			\STATE Compute $\Phi=\ln(1+\phi_m p)-\kappa_m p$;
			\IF{$\Phi \ge 0$}
			\STATE $\ell=p$;
			\ELSE
			\STATE $r=p$;
			\ENDIF
			\UNTIL{$(r-\ell)\le \varepsilon_p$};
			\STATE Set $\bar p_m=\ell$ and $p_m^{\mathrm{up}}=\min\{p^{\max},\,\bar p_m\}$;
			\STATE \textbf{if} $p_m^{\min}>p_m^{\mathrm{up}}$ \textbf{then} declare user $m$ infeasible and \textbf{return};
			\STATE \textbf{else} set $p_m^{\star}=p_m^{\mathrm{up}}$ and \textbf{return};
		\end{algorithmic}
	\end{algorithm}

	\subsection{Power Control}
	When only the transmit power vector $\boldsymbol{p}$ is optimized while fixing the token numbers $\boldsymbol{K}$ and bandwidth allocation $\boldsymbol{W}$, the system-level objective in $\mathcal{P}0'$ reduces to minimizing the worst-case uplink latency $\tau = \max_{m\in\mathcal{U}} T_m^{\mathrm{U}}(p_m)$ under the energy and standing-time constraints. Since the transmit powers of different clients are mutually independent, the power control problem can be decomposed into independent subproblems for each device. For a given device $m \in \mathcal{U}$, the power-control subproblem $\mathcal{SUBP}1$ can be expressed as
	\begin{subequations}
		\label{SUBP1-power}
		\begin{align}
			\mathcal{SUBP}1:\quad 
			&\underset{p_m}{\min}\quad 
			T_m^{\mathrm{U}}(p_m) \\
			\text{s.t.}\quad
			&\text{C1}, \text{C5-C6}
		\end{align}
	\end{subequations}
	where $T_m^{\mathrm{U}}(p_m)= \frac{S_m(K_m)}{W_m\log_2\!\left(1+\frac{p_m h_m}{n_0 W_m}\right)}$ denotes the uplink latency of device $m$. For fixed $(K_m,W_m)$, $T_m^{\mathrm{U}}(p_m)$ is strictly decreasing in $p_m$, and minimizing $\tau$ amounts to reducing all $T_m^{\mathrm{U}}$ as much as possible within the feasible power region. From constraint C6, the standing-time constraint of device $m$ can be rewritten as
	\begin{equation}
		T_m^{\mathrm{U}}(p_m) 
		\le T_m^{\max} \triangleq T_m^{\mathrm{standing}} - T_m^0,
	\end{equation}
	which is equivalent to the following lower bound on the transmit power:
	\begin{equation}
		p_m \ge p_m^{\min}
		= \frac{n_0 W_m}{h_m}\!\left(2^{\frac{S_m}{W_m T_m^{\max}}}-1\right).
	\end{equation}
	This specifies the minimum transmit power required to satisfy the latency constraint of device $m$.
	\begin{myTheo}
		$E_m^{\mathrm{U}}(p_m)$ is strictly increasing in $p_m$ for $p_m>0$.
		\begin{proof}
			Let $\phi_m \triangleq \tfrac{h_m}{n_0 W_m}>0$. The uplink communication energy consumption of device $m$ can be rewritten as
			\begin{align}
                \small
				E_m^{\mathrm{U}}(p_m)
				&= \frac{p_m S_m}{W_m\,\log_2(1+\phi_m p_m)} \notag\\
				&= \underbrace{\frac{S_m \ln 2}{W_m}}_{\chi>0}\cdot
				\frac{p_m}{\ln(1+\phi_m p_m)}
				\;\triangleq\; \chi\,\tilde G(p_m),
			\end{align}
			where $\tilde G(p)=\dfrac{p}{\ln(1+\phi_m p)}$. It suffices to show that $\tilde G(p)$ is strictly increasing for $p>0$. The derivative is
			\begin{equation}
                \small
				\tilde G'(p)
				= \frac{\ln(1+\phi_m p)-\frac{\phi_m p}{1+\phi_m p}}
				{\big[\ln(1+\phi_m p)\big]^2}
				= \frac{\psi(\phi_m p)}{\big[\ln(1+\phi_m p)\big]^2},
			\end{equation}
			where $\psi(x)=\ln(1+x)-\tfrac{x}{1+x}$ for $x>0$. Since $\psi'(x)=\tfrac{x}{(1+x)^2}>0$ and $\psi(0)=0$, we have $\psi(x)>0$ for $x>0$. Thus $\tilde G'(p)>0$ for $p>0$, proving that $E_m^{\mathrm{U}}(K_m,W_m,p_m)=\chi\,\tilde G(p_m)$ is strictly increasing on $(0,\infty)$.
			\par\noindent
			This completes the proof.
		\end{proof}
	\end{myTheo}
	Since $T_m^{\mathrm{U}}(p_m)$ is strictly decreasing and $E_m^{\mathrm{U}}(p_m)$ is strictly increasing with respect to $p_m$, the optimal transmit power corresponds to the largest feasible value that satisfies all constraints. To characterize the upper bound induced by the energy constraint C5, define
	\begin{equation}
		\Phi_m(p_m)=\ln(1+\phi_m p_m)-\kappa_m p_m,
		\label{eq:phi_function_new}
	\end{equation}
	where $\kappa_m=\tfrac{S_m\ln2}{E^{\max} W_m},\phi_m=\tfrac{h_m}{n_0 W_m}$. The constraint $E_m^{\mathrm{U}}(p_m)\le E^{\max}$ is equivalent to $\Phi_m(p_m)\ge 0$. We have
	\begin{align}
		\Phi_m'(p_m)&=\frac{\phi_m}{1+\phi_m p_m}-\kappa_m,\qquad\\
		\Phi_m''(p_m)&=-\frac{\phi_m^2}{(1+\phi_m p_m)^2}<0,
	\end{align}
	which implies that $\Phi_m(p_m)$ is strictly concave on $p_m\ge 0$ with $\Phi_m(0)=0$ and $\lim_{p_m\to\infty}\Phi_m(p_m)=-\infty$. Therefore, as illustrated in Fig. \ref{fig:figureshow}, the feasibility structure of the energy constraint can be characterized as follows:
	\begin{itemize}
		\item If  $\frac{p^{\max}S_m}{W_m \log_2\!\left(1+\phi_m p^{\max}\right)} \le E^{\max}$, then $\Phi_m(p^{\max})\ge 0$, indicating that the energy constraint is inactive over $[0,p^{\max}]$ and all powers up to the peak limit are feasible.
		\item If $\kappa_m \ge \phi_m$, then $\Phi_m'(p_m)\le 0$ for all $p_m\ge 0$, which implies 
		$\Phi_m(p_m)\le \Phi_m(0)=0$ for any $p_m>0$.  
		Hence, no strictly positive transmit power can satisfy $E_m^{\mathrm{U}}\!\le E^{\max}$.
		\item If $0<\kappa_m<\phi_m$ and $\Phi_m(p^{\max})<0$, then the strict concavity of $\Phi_m$ guarantees the existence of a unique root $\bar p_m \in (0,p^{\max})$ satisfying $\Phi_m(\bar p_m)=0$.  
		Consequently, the feasible power region induced by the energy constraint is the interval $[0,\,\bar p_m]$.
	\end{itemize}
    
    \begin{figure}[t]
		\centering
		\includegraphics[width=0.9\linewidth]{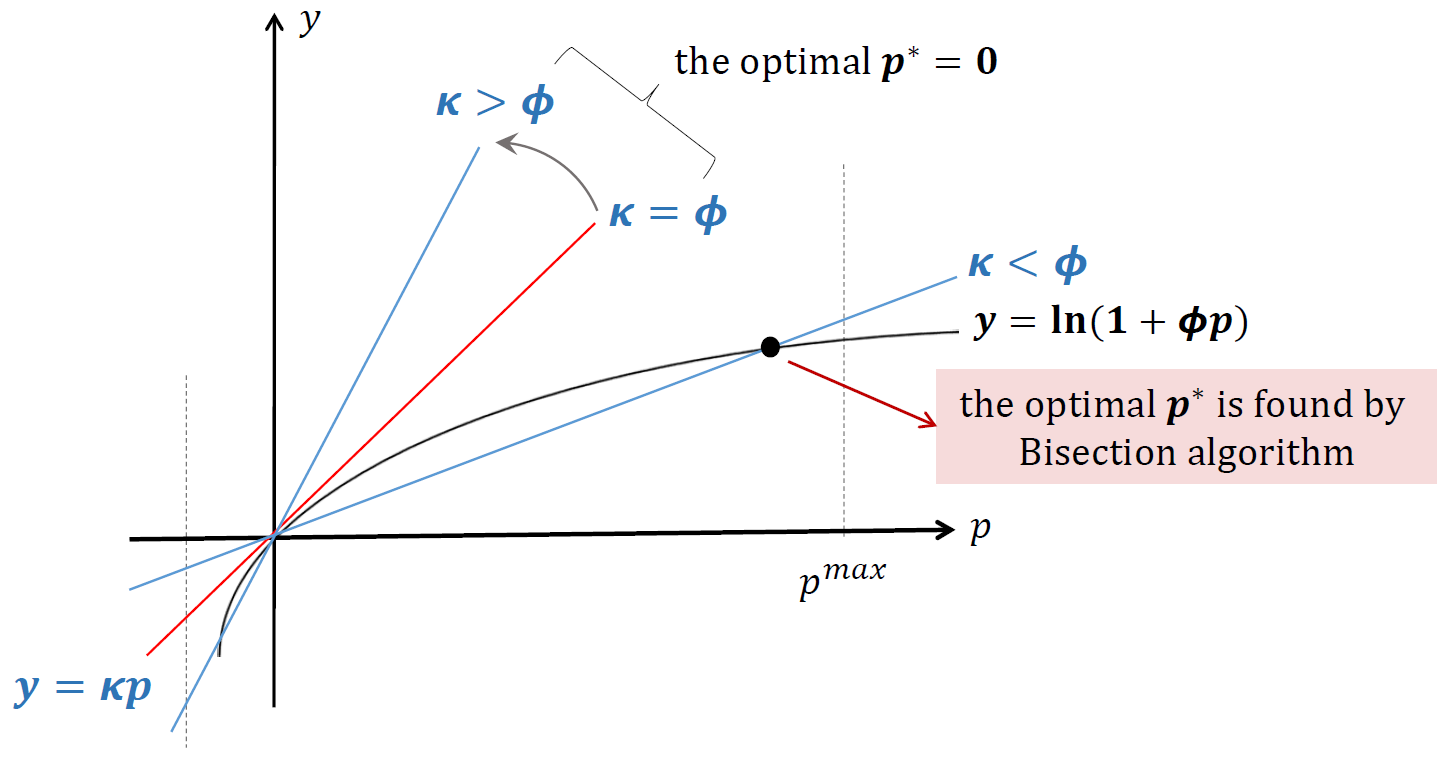}
		\caption{Relationship between $\phi$ and $\kappa$ in the bisection algorithm.}
		\label{fig:figureshow}
	\end{figure}

	Combining the peak-power limit, the latency-induced lower bound, and the energy constraint, the optimal transmit power admits the compact form
	\begin{equation}
		p_m^\star =
		\begin{cases}
			p^{\max}, 
			& E_m^{\mathrm{U}}(p^{\max}) \le E^{\max}
			\ \text{and}\ p^{\max} \ge p_m^{\min}, \\[4pt]
			\min\{p^{\max},\,\bar p_m\},
			& 0<\kappa_m<\phi_m
			\ \text{and}\ p_m^{\min} \le \bar p_m, \\[4pt]
			\text{infeasible},
			& \text{otherwise}.
		\end{cases}
	\end{equation}
	
	For the practical case $0<\kappa_m<\phi_m$ and $E_m^{\mathrm{U}}(p^{\max})>E^{\max}$, the unique root $\bar p_m$ of $\Phi_m(p_m)=0$ can be efficiently found using the \emph{bisection method} over the interval $[0,p^{\max}]$, owing to the strict concavity of $\Phi_m(p_m)$. The details are shown in Alg. \ref{alg-bisection-new}.

    \begin{algorithm}[t]
        \caption{Optimal Bandwidth Allocation for $\mathcal{SUBP}2$}
        \label{alg:subp2-bisection}
        \small
        \begin{algorithmic}[1]
        \REQUIRE Powers $\{p_m\}$, data sizes $\{S_m(K_m)\}$, total bandwidth $W_{\mathrm{tot}}$, tolerance $\varepsilon>0$, and an upper bound $\tau_{\max}$ such that $\Phi(\tau_{\max})\le W_{\mathrm{tot}}$.
        \ENSURE Optimal $(\boldsymbol W^\star,\tau^\star)$.
        
        \STATE Set a small $\tau_{\min}>0$ and initialize $\tau_{\mathrm{low}}=\tau_{\min}$, $\tau_{\mathrm{high}}=\tau_{\max}$.
        \REPEAT
            \STATE $\tau=(\tau_{\mathrm{low}}+\tau_{\mathrm{high}})/2$.
            \FOR{each user $m$}
                \STATE Compute $R_m^{\min}(\tau)$.
                \STATE Compute $W_m^{\min}(\tau)$ via bisection on $W_m\in(0,W_{\mathrm{tot}}]$ solving $R_m^{\mathrm UL}(W_m)=R_m^{\min}(\tau)$.
            \ENDFOR
            \STATE $\Phi(\tau)=\sum_m W_m^{\min}(\tau)$.
            \IF{$\Phi(\tau)>W_{\mathrm{tot}}$}
                \STATE $\tau_{\mathrm{low}}=\tau$.
            \ELSE
                \STATE $\tau_{\mathrm{high}}=\tau$.
            \ENDIF
        \UNTIL{$\tau_{\mathrm{high}}-\tau_{\mathrm{low}}\le\varepsilon$}
        
        \STATE $\tau^\star=\tau_{\mathrm{high}}$ and $W_m^\star=W_m^{\min}(\tau^\star)$.
        \STATE \textbf{return} $(\boldsymbol W^\star,\tau^\star)$.
        \end{algorithmic}
    \end{algorithm}

    \subsection{Bandwidth Allocation}
        Given the transmission power $\boldsymbol p$ and token selection $\boldsymbol K$, the bandwidth allocation subproblem is formulated as:
            \begin{align}
                \min_{\boldsymbol W,\tau}\quad &\tau  \nonumber\\
                \mathrm{s.t.}\quad 
                & C2,\,C3,\,C5,\,C6,\,C7. \nonumber
            \end{align}
        The constraints C2--C7 imposes an effective latency upper bound on each user $m$: $T_m^{\mathrm U}(W_m)\le\bar T_m(\tau)$, where $\bar T_m(\tau)=\min\{\tau,\;E^{\max}/p_m,\;T_m^{\mathrm{standing}}-T_m^0\}$. To satisfy this deadline, the allocated bandwidth must support a minimum transmission rate $R_m^{\mathrm UL}(W_m)\ge R_m^{\min}(\tau)$, with
            \begin{equation}
                \small
                R_m^{\min}(\tau)=\max\left\{\frac{S_m(K_m)}{\tau},\;\frac{p_m S_m(K_m)}{E^{\max}},\;\frac{S_m(K_m)}{T_m^{\mathrm{standing}}-T_m^0}\right\}.
                \label{eq:R_min}
            \end{equation}
        Since the uplink rate function $R_m^{\mathrm UL}(W_m)$ is strictly increasing and concave with respect to $W_m$, it admits a unique inverse $\psi_m(\cdot)$. Consequently, the rate requirement translates to a bandwidth lower bound:
            \begin{equation}
                W_m\ge W_m^{\min}(\tau)\triangleq \psi_m\big(R_m^{\min}(\tau)\big).
            \end{equation}
        Let $\Phi(\tau)=\sum_m W_m^{\min}(\tau)$ denote the total bandwidth required to achieve a system latency of $\tau$. Analyzing \eqref{eq:R_min}, as $\tau$ decreases, the term $S_m(K_m)/\tau$ grows, forcing $R_m^{\min}(\tau)$ and consequently $\Phi(\tau)$ to increase. Specifically, $\Phi(\tau)$ is continuous and strictly decreasing with respect to $\tau$ in the region where $\tau$ is the active bottleneck (i.e., when $S_m/\tau$ determines the maximum in \eqref{eq:R_min}).\par
        For the problem to be feasible, there must exist a $\tau$ such that $\Phi(\tau) \le W_{\text{tot}}$. The optimal latency $\tau^\star$ is the minimum value satisfying this budget constraint. Due to the monotonicity of $\Phi(\tau)$, $\tau^\star$ corresponds to the unique root of the equation:
            \begin{equation}
                \Phi(\tau^\star)=W_{\mathrm{tot}}.
            \end{equation}
        Once $\tau^\star$ is determined, the optimal bandwidth allocation is uniquely given by $W_m^\star=W_m^{\min}(\tau^\star)$. Solving for $(\boldsymbol W^\star,\tau^\star)$ thus reduces to a one-dimensional search for the root of $\Phi(\tau)=W_{\mathrm{tot}}$. Since the inverse Shannon capacity $\psi_m(\cdot)$ lacks a closed-form expression in terms of elementary functions, we calculate $W_m^{\min}(\tau)$ numerically.  The overall procedure employs a nested bisection method: the outer loop searches for $\tau^\star$, while the inner loop inverts the rate function, as detailed in Algorithm~\ref{alg:subp2-bisection}.

	\begin{algorithm}[t]
		\caption{Joint Optimization of Power, Bandwidth, and Token Selection}
		\label{alg:joint}
        \small
		\begin{algorithmic}[1]
			\REQUIRE Initial feasible point $(\mathbf{p}^0,\mathbf{W}^0,\mathbf{K}^0,\tau^0)$, tolerances $\varepsilon_p,\varepsilon_W,\varepsilon_K,\varepsilon_\tau>0$. Set $i=0$.
			\REPEAT
			\STATE Compute the optimal transmit power vector $\mathbf{p}^{\,i+1}$ by solving $\mathcal{SUBP}1$ for all users given $(\mathbf{W}^i,\mathbf{K}^i)$ using Algorithm~\ref{alg-bisection-new}.
			\STATE Compute the optimal bandwidth allocation $\mathbf{W}^{\,i+1}$ and latency bound $\tau^{\,i+1}$ by solving $\mathcal{SUBP}2$ given $(\mathbf{p}^{\,i+1},\mathbf{K}^i)$ using Algorithm\ref{alg:subp2-bisection}.
			\STATE Compute the optimal token numbers $\mathbf{K}^{\,i+1}$ by solving $\mathcal{SUBP}3$ in closed form given $(\mathbf{p}^{\,i+1},\mathbf{W}^{\,i+1},\tau^{\,i+1})$.
			\STATE $i \leftarrow i+1$;
			\UNTIL{
				$\|\mathbf{p}^{\,i}-\mathbf{p}^{\,i-1}\|<\varepsilon_p$
				\ \textbf{and} \
				$\|\mathbf{W}^{\,i}-\mathbf{W}^{\,i-1}\|<\varepsilon_W$
				\ \textbf{and} \
				$\|\mathbf{K}^{\,i}-\mathbf{K}^{\,i-1}\|<\varepsilon_K$
				\ \textbf{and} \
				$|\tau^{\,i}-\tau^{\,i-1}|<\varepsilon_\tau$
			}
			\ENSURE $\mathbf{p}^{\,i},\mathbf{W}^{\,i},\mathbf{K}^{\,i},\tau^{\,i}$.
		\end{algorithmic}
	\end{algorithm}

    \subsection{Token Selection}
	    Given the bandwidth allocation $\boldsymbol{W}$, transmit power $\boldsymbol{p}$, and latency bound $\tau$, the token-selection subproblem of $\mathcal{P}0'$ becomes
        	\begin{subequations}
        		\label{eq:SUBP-token}
        		\begin{align}
        			\mathcal{SUBP}3:\quad
        			&\max_{\boldsymbol{K}} \ \sum_{m\in\mathcal{U}} f_m(K_m) \nonumber \\
        			\text{s.t.}\quad \text{C4}-\text{C7}.\nonumber
        		\end{align}
        	\end{subequations}
	The payload size grows linearly with the number of selected tokens, as $S_m(K_m)=B\,(K_m+2)\,D\,q_0$. Since $(W_m,p_m)$ are fixed in this subproblem, the uplink rate $R_m^{\mathrm{UL}}=W_m\log_2\!\left(1+\frac{p_m h_m}{n_0 W_m}\right)$ is a constant, and the uplink latency and energy consumption reduce to	$T_m^{\mathrm{U}}=\frac{\beta_m (K_m+2)}{R_m^{\mathrm{UL}}}, E_m^{\mathrm{U}}(K_m) = p_m\,T_m^{\mathrm{U}}(K_m)$,where $\beta_m = BDq_0 $ denotes the number of transmitted bits per token. Substituting these expressions into the constraints yields the following linear upper bounds on $K_m$:
    	\begin{align}
    		&K_m \le \frac{E^{\max}R_m^{\mathrm{UL}}}{p_m\beta_m} - 2,\\
    		&K_m \le \frac{(T_m^{\mathrm{standing}}-T_0)R_m^{\mathrm{UL}}}{\beta_m} - 2,\\
    		&K_m \le \frac{\tau R_m^{\mathrm{UL}}}{\beta_m} - 2.
    	\end{align}
	Hence, the maximum feasible token number for client $m$ is
        \begin{equation}
            \begin{aligned}
            K_m^{\max} &=
            \left\lfloor
            \min\!\left\{
            N,\ 
            \frac{E^{\max}R_m^{\mathrm{UL}}}{p_m\beta_m}-2,\right.\right.\\
            &\qquad\left.\left.
            \frac{(T_m^{\mathrm{standing}}-T_m^0)R_m^{\mathrm{UL}}}{\beta_m}-2,\ 
            \frac{\tau R_m^{\mathrm{UL}}}{\beta_m}-2
            \right\}
            \right\rfloor .
            \end{aligned}
        \end{equation}

	If $K_m^{\max} < K^{\min}$, client $m$ is infeasible under the current $(W_m,p_m,\tau)$.
	Otherwise, the per-client decision reduces to the one-dimensional discrete maximization
	\begin{equation}
		K_m^\star=\arg\max_{K^{\min}\le K_m\le K_m^{\max}} f_m(K_m).
	\end{equation}
    Since $f_m(K_m)$ is monotonically increasing in $K_m$, the objective is maximized by choosing the largest feasible token budget. Hence, the optimal solution to this subproblem is achieved at
        \begin{equation}
            K_m^\star = K_m^{\max}.
        \end{equation}
    Accordingly, each client selects the maximum number of informative tokens permitted by its latency and energy constraints.

    \subsection{Iterative Joint Optimization and Complexity} The original problem is decomposed into three subproblems and solved alternately, as shown in Alg.~\ref{alg:joint}. In $\mathcal{SUBP}1$, each user computes its optimal transmit power via a one-dimensional bisection search, yielding a per-iteration complexity of $\mathcal{O}\!\left(U\log(1/\varepsilon_p)\right)$. In $\mathcal{SUBP}2$, $\tau^\star$ is obtained by an outer bisection search, and for each candidate $\tau$ the minimum required bandwidth $W_m^{\min}(\tau)$ is computed via an inner bisection to invert $R_m^{\mathrm{UL}}(W_m)$. This results in $\mathcal{O}\!\left(U\log(1/\varepsilon_W)\log(1/\varepsilon_\tau)\right)$ complexity, where $\varepsilon_\tau$ and $\varepsilon_W$ are the tolerances of the outer and inner bisection loops, respectively. In $\mathcal{SUBP}3$, each user obtains the token number by evaluating the maximum feasible $K_m$, which incurs $\mathcal{O}(U)$ complexity. Therefore, the the total complexity is $\mathcal{O}\!\left(I_{\mathrm{iter}}\!\left[U\log(1/\varepsilon_p) + U\log(1/\varepsilon_W)\log(1/\varepsilon_\tau)\right]\right)$, where $I_{\mathrm{iter}}$ is the number of outer iterations required for convergence.

\begin{table*}[t]
    \caption{Top-1 Accuracy (\%) on ImageNet100, Oxford Flowers-102, and CUB-200-2011.}
    \centering 
    \scriptsize
    \setlength{\tabcolsep}{7pt}
    \renewcommand{\arraystretch}{1.1} 
    \begin{tabular}{c|c|cc|cc|cc}
    \hline
     & & \multicolumn{6}{c}{\textbf{Datasets}}\\ \cline{3-8} 
    \textbf{Backbone} & \textbf{Method} & \multicolumn{2}{c}{\textbf{ImageNet100}}  & \multicolumn{2}{c}{\textbf{Oxford Flowers-102}} & \multicolumn{2}{c}{\textbf{CUB-200-2011}} \\
    \cline{3-8}
     &  & \textit{IID} & \textit{Non-IID} & \textit{IID} & \textit{Non-IID} & \textit{IID} & \textit{Non-IID} \\
    \hline
    \multirow{6}{*}{ViT-S/16}
    & LocalLoRA  & 68.42 & 37.04 & 84.11 & 85.09 & 42.59 & 44.21 \\
    & FedLoRA    & 64.23 & 39.69 & 74.21 & 55.75 & 41.33 & 17.33 \\
    & SplitLoRA  & 85.16 & 84.34 & 98.00 & 99.61 & 82.21 & 75.91 \\
    & SFLora     & 84.20 & 83.24 & 98.85 & 99.19 & 81.62 & 74.49 \\
    & ST-SFLora-Full & 84.36 & 83.13 & 98.83 & 99.10 & 81.39 & 74.53 \\
    & ST-SFLora (Ours) & 80.77 & 79.47 & 97.26 & 97.67 & 78.21 & 68.86 \\
    \hline
    \multirow{6}{*}{ViT-B/16}
    & LocalLoRA & 80.78 & 45.64 & 80.44 & 75.43 & 37.09 & 38.97 \\
    & FedLoRA   & 81.56 & 52.49 & 74.11 & 57.95 & 59.83 & 21.53 \\
    & SplitLoRA & 90.08 & 89.47 & 99.25 & 99.29 & 84.79 & 80.65 \\
    & SFLora    & 89.55 & 89.09 & 98.97 & 98.69 & 84.55 & 79.84 \\
    & ST-SFLora-Full & 89.48 & 89.14 & 99.00 & 98.74 & 84.57 & 79.91 \\
    & ST-SFLora (Ours) & 87.10 & 85.81 & 96.72 & 96.43 & 80.02 & 73.69 \\
    \hline
    \multirow{6}{*}{ViT-L/16}
    & LocalLoRA   & 82.52 & 44.86  & 93.89 & 92.54 & 58.06 & 57.76 \\
    & FedLoRA     & 83.02 & 51.32  & 93.86 & 73.72 & 67.15 & 32.56 \\
    & SplitLoRA   & 90.47 & 89.79  & 99.61 & 99.73 & 87.96 & 84.92 \\
    & SFLora      & 90.39 & 89.50  & 99.61 & 99.73 & 87.70 & 84.39 \\
    & ST-SFLora-Full & 90.49 & 89.51 & 99.61 & 99.73 & 87.75 & 84.11 \\
    & ST-SFLora (Ours) & 87.20 & 85.45 & 99.00 & 99.14 & 81.36 & 75.83 \\
    \hline
    \end{tabular}
    \label{tab:big_comparison}
\end{table*}     		
\begin{table}[t]
    \caption{Client-side model computation and communication overhead.}
    \centering
    \scriptsize
    \setlength{\tabcolsep}{4pt}
    \newcolumntype{C}[1]{>{\centering\arraybackslash}m{#1}}
    \begin{tabular}{C{2.0cm} C{2.0cm} C{0.8cm} C{0.8cm} C{0.8cm}}
    \toprule
    \multirow{2}{*}{\textbf{Method}} &
    \textbf{Computation (GB)} &
    \multicolumn{3}{c}{\textbf{Communication (MB)}} \\
    \cmidrule(lr){2-2} \cmidrule(lr){3-5}
    & \textbf{GPU Mem.} &
      \textbf{Model} &
      \textbf{LoRA} &
      \textbf{Token} \\
    \midrule
    LocalLoRA           & 9.0 & 335.3 & 7.9 & 0 \\
    FedLoRA             & 9.0 & 335.3 & 7.9 & 0 \\
    SplitLoRA           & 2.3 & 58.2  & 1.3 & $\tfrac{3}{16}N$ \\
    SFLora              & 2.3 & 58.2  & 1.3 & $\tfrac{3}{16}N$ \\
    ST-SFLora-Full      & 1.4 & 0     & 1.3 & $\tfrac{3}{16}N$ \\
    ST-SFLora (top-$K$) & 1.4 & 0     & 1.3 & $\tfrac{3}{16}(K+2)$ \\
    \bottomrule
    \end{tabular}
    \label{tab:client_resource_cost}
\end{table}
        
	\section{Performance Evaluation}
        \label{section8}
    	\subsection{Experimental Setting}  
            We consider a distributed wireless edge network consisting of one central server and $100$ mobile clients. To simulate dynamic device availability, the number of participating clients in each round is modeled by a Poisson distribution. These active clients are then considered for the joint optimization and training process. The clients are uniformly distributed within a coverage area with a radius of $5$ to $500$m, communicating with the server via wireless uplink channels.  The maximum transmit power for each client is set to $0.2$~W, and the total system uplink bandwidth is $50$~MHz. The maximum GPU frequencies for the devices are selected from the intervals $[1.0, 1.5]$ GHz. The channel gain accounts for large-scale fading following a path-loss model with an exponent of $2.5$. The noise power spectral density is set to $-174$~dBm/Hz. We assume that the mobile devices are equipped with GPUs, such as the Apple A15, which features 4 to 6 cores. Accordingly, the number of GPU cores on the device side, denoted as $C_n$, is selected from the range $[4, 6]$. The computational capability is normalized such that the FLOPs per cycle per core is set to $D_n=1$ for all clients.\par
            We evaluate the proposed ST-SFLora framework on distributed image classification tasks employing Vision Transformer (ViT) backbones. All models are initialized with pretrained weights obtained from the \textit{timm} library. Unless stated otherwise, the default hyperparameters are set as follows: learning rate $0.01$, batch size $64$, and LoRA rank $16$. Experiments are conducted on three benchmarks covering both generic and fine-grained classification tasks: \textbf{ImageNet100}~\cite{russakovsky2015imagenet}, a widely-used subset of ILSVRC-2012 containing 100 classes, serving as a standard benchmark for generic object recognition; \textbf{Oxford Flowers-102}~\cite{nilsback2008automated}, a fine-grained dataset consisting of 102 flower categories, characterized by large scale, pose, and light variations within classes; and \textbf{CUB-200-2011}~\cite{wah2011caltech}, a challenging fine-grained dataset comprising 200 bird species, which requires capturing subtle discriminative features to distinguish between visually similar sub-categories. We consider both IID and non-IID data partition settings. For the non-IID scenarios, the local class distributions are generated using a Dirichlet distribution with a concentration parameter $\alpha=0.5$ to simulate data heterogeneity.

            \begin{figure*}[ht]
                \centering
                \subfloat[Convergence of the joint optimization algorithm.]{
                    \includegraphics[width=0.33\linewidth]
                    {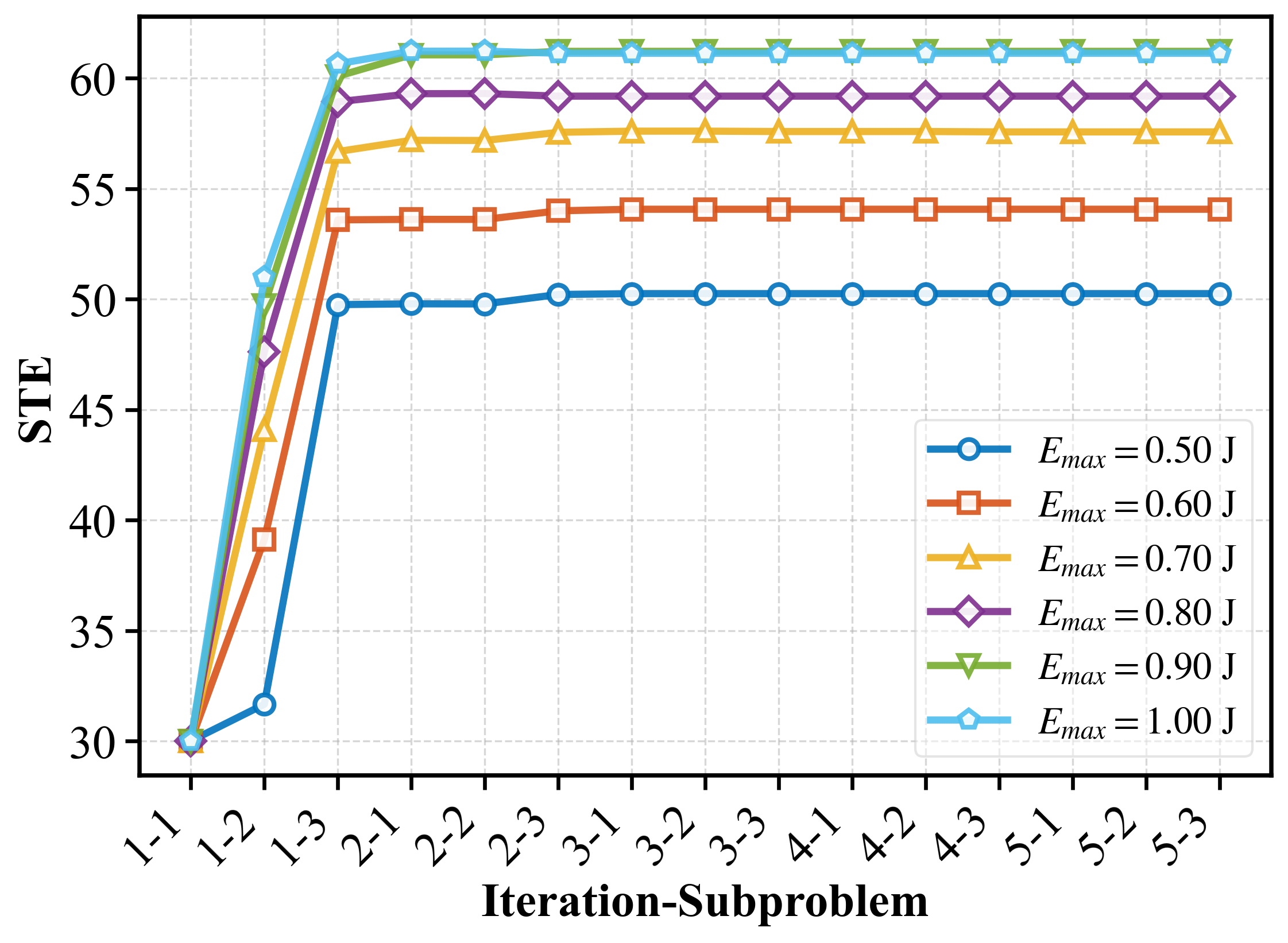}
                    \label{a}
                }\hfill
                \subfloat[Comparison of STE across algorithms.]{
                    \includegraphics[width=0.31\linewidth]
                    {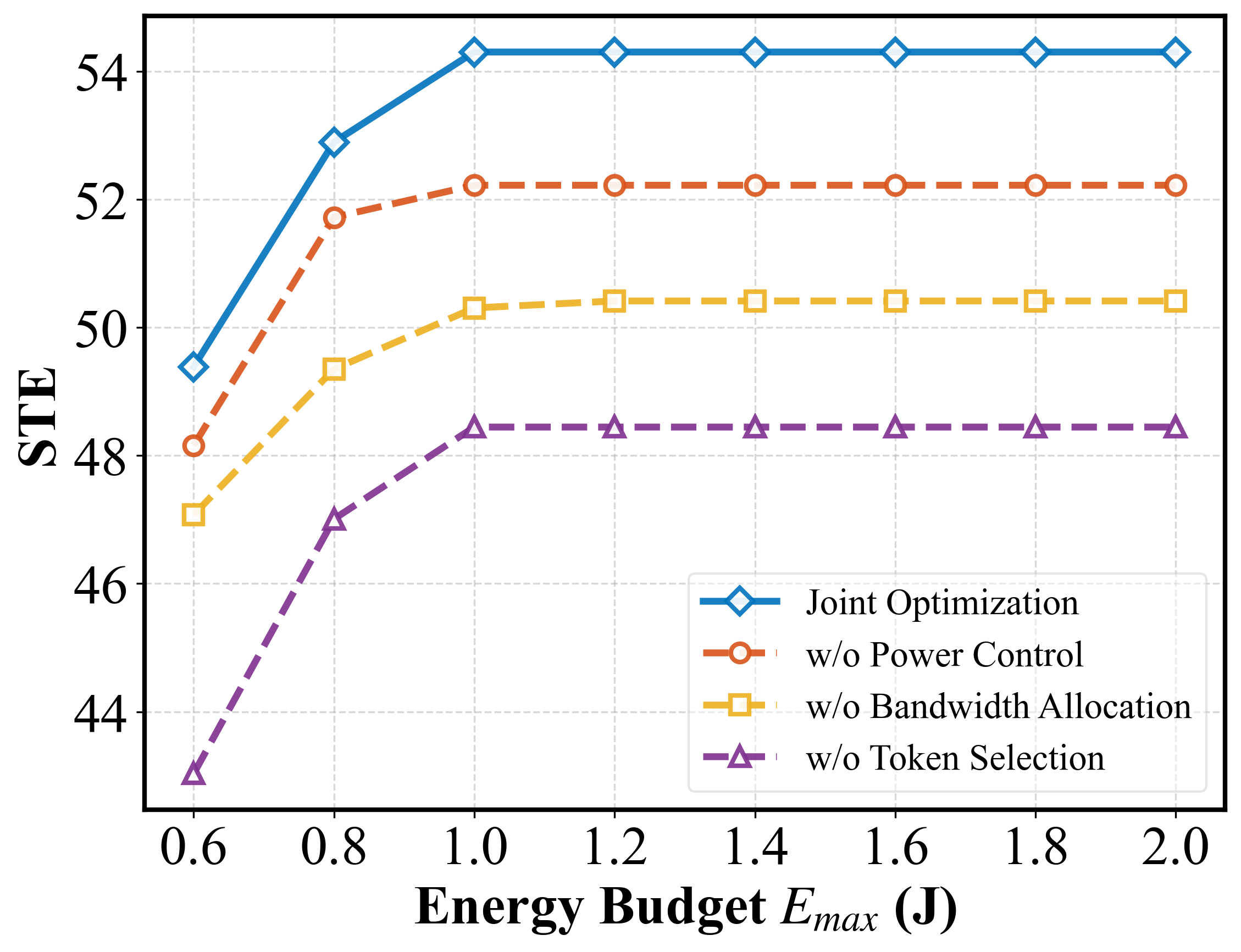}
                    \label{b}
                }\hfill
                \subfloat[The average token amount of varying $W_{\mathrm{tot}}$ and $E_{\mathrm{max}}$.]{
                    \includegraphics[width=0.33\linewidth]
                    {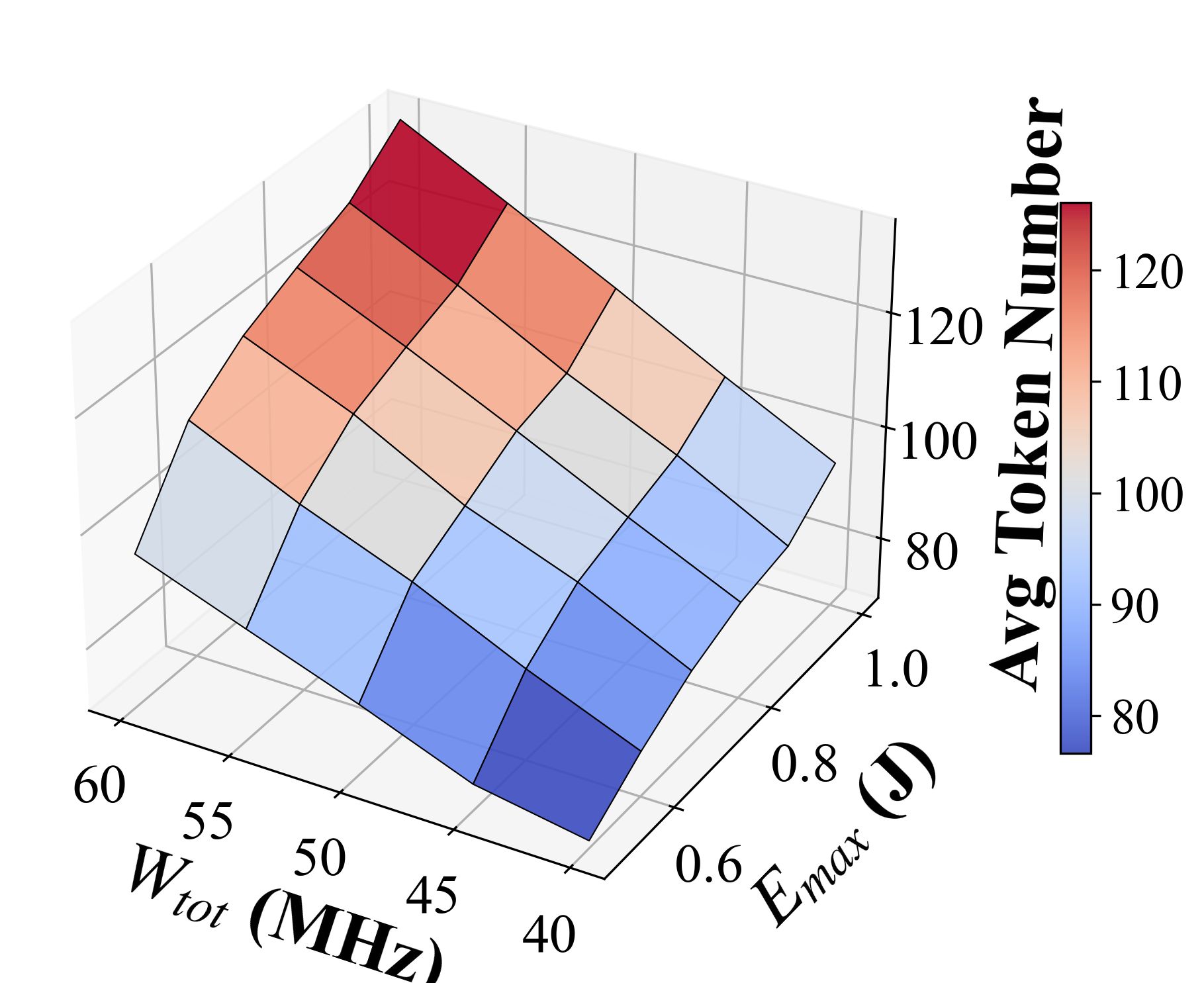} 
                    \label{c}
                }
                \caption{Optimization Algorithm Performance}
                \label{fig:optimization-results}
            \end{figure*}
            \begin{figure}[t]
                \centering
                \includegraphics[width=1.0\linewidth]{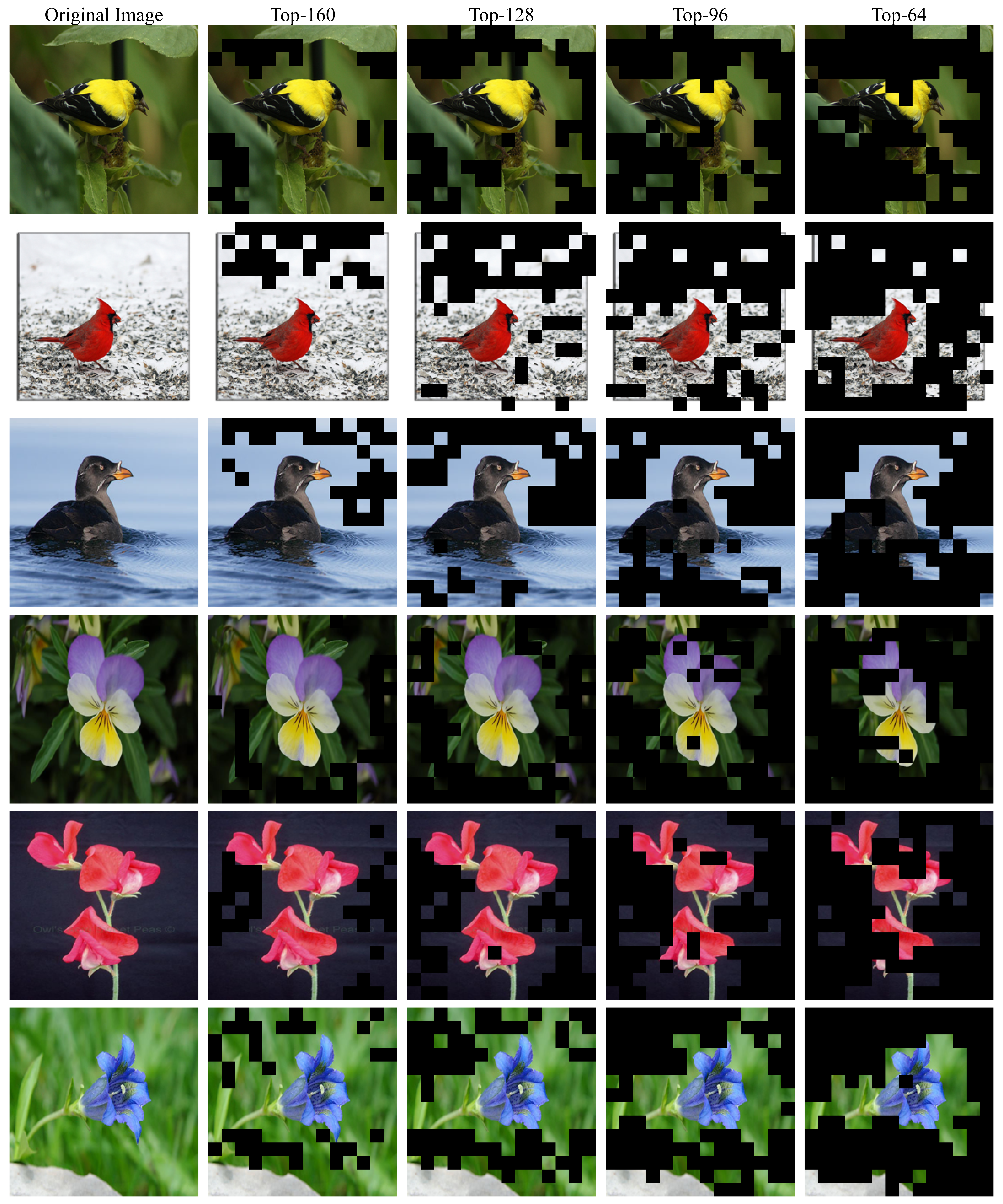}
                \caption{Visualization of proposed tokens selection strategy.}
                \label{fig:token-vis}
            \end{figure}
        
        \subsection{Performance under Different Distributed Architectures}
            Table~\ref{tab:big_comparison} compares six representative distributed fine-tuning architectures. LocalLoRA performs LoRA-based fine-tuning locally without communication. FedLoRA follows the standard FL pipeline in which clients fine-tune LoRA modules locally and upload them for aggregation. SplitLoRA adopts the SL paradigm, where devices sequentially collaborate with the server to train a partitioned model with LoRA modules on both sides. SFLora extends LoRA into the SFL architecture, enabling parallel client participation while retaining trainable modules on both sides. ST-SFLora-Full is our proposed SFL variant that freezes all client-side parameters without communication token selection. ST-SFLora further incorporates the proposed semantic token selection mechanism to reduce uplink traffic.\par
            LocalLoRA and FedLoRA exhibit the lowest accuracy, particularly under non-IID settings. LocalLoRA suffers from limited data diversity and insufficient data volume. In contrast, FedLoRA aggregates only the LoRA updates, whose parameter space is substantially smaller than the full model. Under heterogeneous data, these low-rank updates often point in inconsistent task-adaptive directions, leading to mutual cancellation during FedAvg aggregation. Consequently, both methods remain highly sensitive to client drift and show pronounced performance degradation under non-IID distributions.\par
            SplitLoRA often achieves the strongest performance, and data heterogeneity has only a marginal impact on the SL framework. SFLora attains accuracy close to SplitLoRA, with differences typically within around $2\%$. ST-SFLora-Full achieves accuracy comparable to SplitLoRA across datasets; on CUB-200-2011 with ViT-Large, it attains 87.75\%, nearly matching the 87.96\% of SplitLoRA. ST-SFLora (top-$K$), which transmits only semantically important tokens, incurs moderate accuracy reductions but consistently outperforms all FL-based baselines, even under aggressive compression.\par
            Table~\ref{tab:client_resource_cost} further shows that split-based designs substantially reduce client-side resource usage when adapting ViT-B/16. The GPU memory requirement decreases from 9.0\,GB (LocalLoRA/FedLoRA) to 2.3\,GB (SplitLoRA/SFLora), and further to 1.4\,GB with ST-SFLora. In addition, LocalLoRA and FedLoRA require broadcasting the entire model at the beginning of training, exceeding 335\,MB of parameters, whereas all split-based variants eliminate this overhead entirely. Their communication cost is dominated by activation transmission, which reduces from $\tfrac{3N}{16}$\,MB to $\tfrac{3K}{16}$\,MB under top-$K$ token selection. Here, $\tfrac{3}{16}$\,MB corresponds to the activation footprint of a single token for one batch, computed as $\frac{3}{16} = \frac{B \cdot D \cdot q_0}{1024^2} = \frac{64 \times 768 \times 32}{1024^2}~\text{MB}.$ These results indicate that ST-SFLora incurs the lowest computational and communication overheads among all compared methods. While this aggressive resource reduction leads to a slight decrease in accuracy, the performance degradation remains within a tolerable range, thereby ensuring the feasibility of fine-tuning on strictly constrained mobile devices.

        \subsection{Impact of Resource Allocation Optimization}
            The impact of the proposed joint resource allocation strategy is evaluated in Fig. \ref{fig:optimization-results}. The optimization framework integrates power control, bandwidth allocation, and token selection, and the results demonstrate their synergistic contribution to maximizing the system's STE. Fig.~\ref{fig:optimization-results}\subref{a} illustrates the convergence of the iterative joint optimization algorithm under different energy budgets $E_{\max}$. In all cases, the objective value (STE) rises rapidly within the first few iterations and stabilizes within fewer than five outer loops. A larger $E_{\max}$ yields a higher converged STE, as increased energy availability relaxes transmission constraints, enabling higher transmit power, wider bandwidth utilization, or a larger token budget. These results verify the computational efficiency and stability of the proposed optimization procedure.\par
            Fig.~\ref{fig:optimization-results}\subref{b} compares the performance of the full joint optimization against three ablated variants. Removing power control yields a moderate performance drop, indicating that adapting transmit power to channel heterogeneity is beneficial. The absence of bandwidth allocation causes a larger degradation, as fixed bandwidth division limits the achievable uplink rates. The most significant drop occurs when token selection is disabled. This result highlights that transmitting all tokens induces excessive latency overheads that outweigh the marginal semantic gain, thereby drastically reducing the transmission efficiency. The full configuration consistently achieves the highest STE across all energy budgets, demonstrating that all three components are essential for system performance.\par
            Fig.~\ref{fig:optimization-results}\subref{c} presents the average number of selected tokens under varying bandwidth $W_{\mathrm{tot}}$ and energy budgets $E_{\max}$. Both resources are positively correlated with the token count: increasing either dimension allows the system to preserve more semantic information. The observed surface trend indicates that energy and bandwidth act as coupled bottlenecks: limiting either resource restricts the allowable token budget, and the algorithm adaptively reduces token usage when resources become scarce. This observation highlights the complementary roles of physical-layer and semantic-layer resource management. 

                \begin{figure*}[t]
                    \centering
                    \subfloat[ImagNet100]{\includegraphics[width=0.32\linewidth]{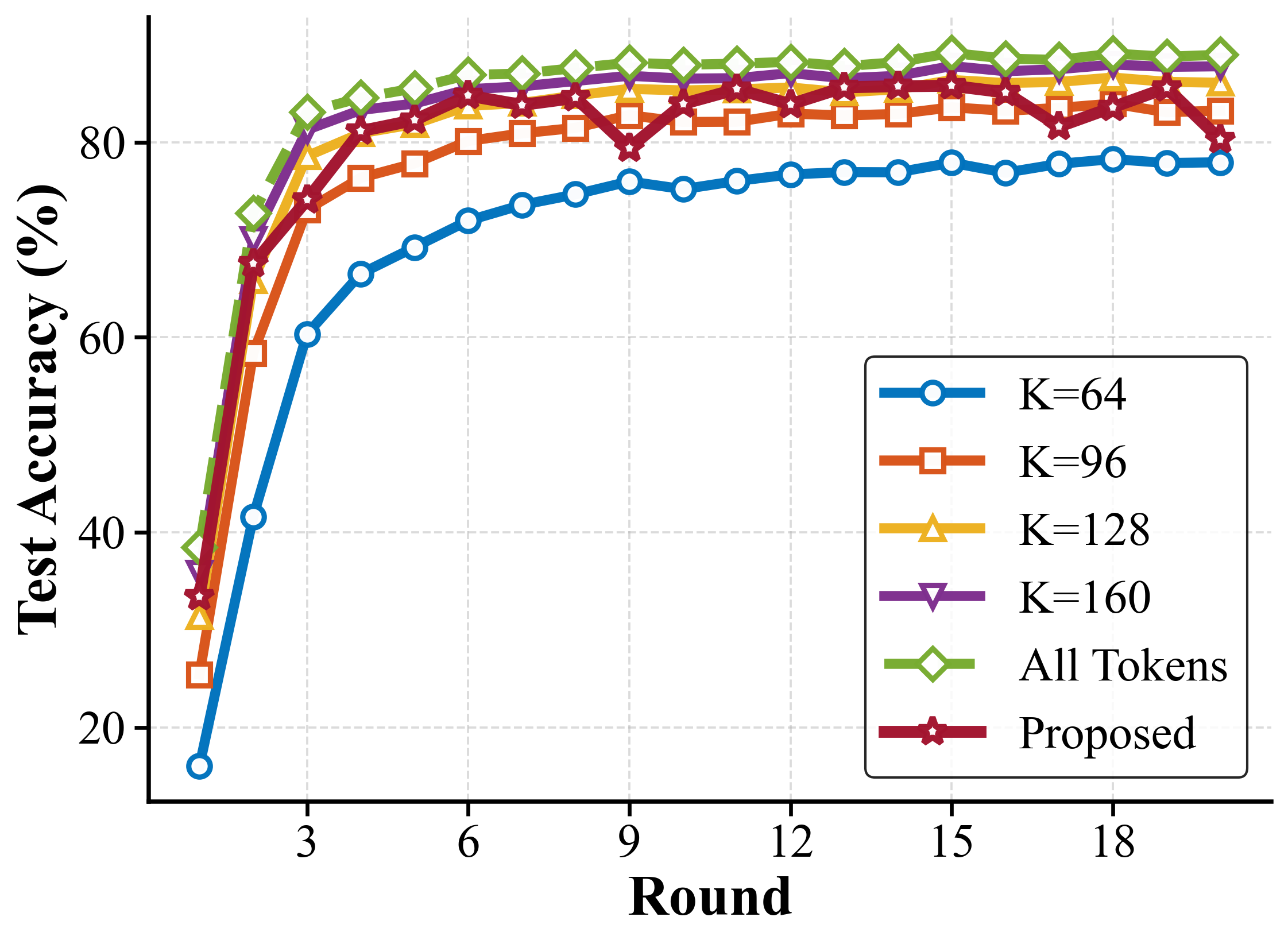}}\hfill
                    \subfloat[Oxford Flowers-102]{\includegraphics[width=0.32\linewidth]{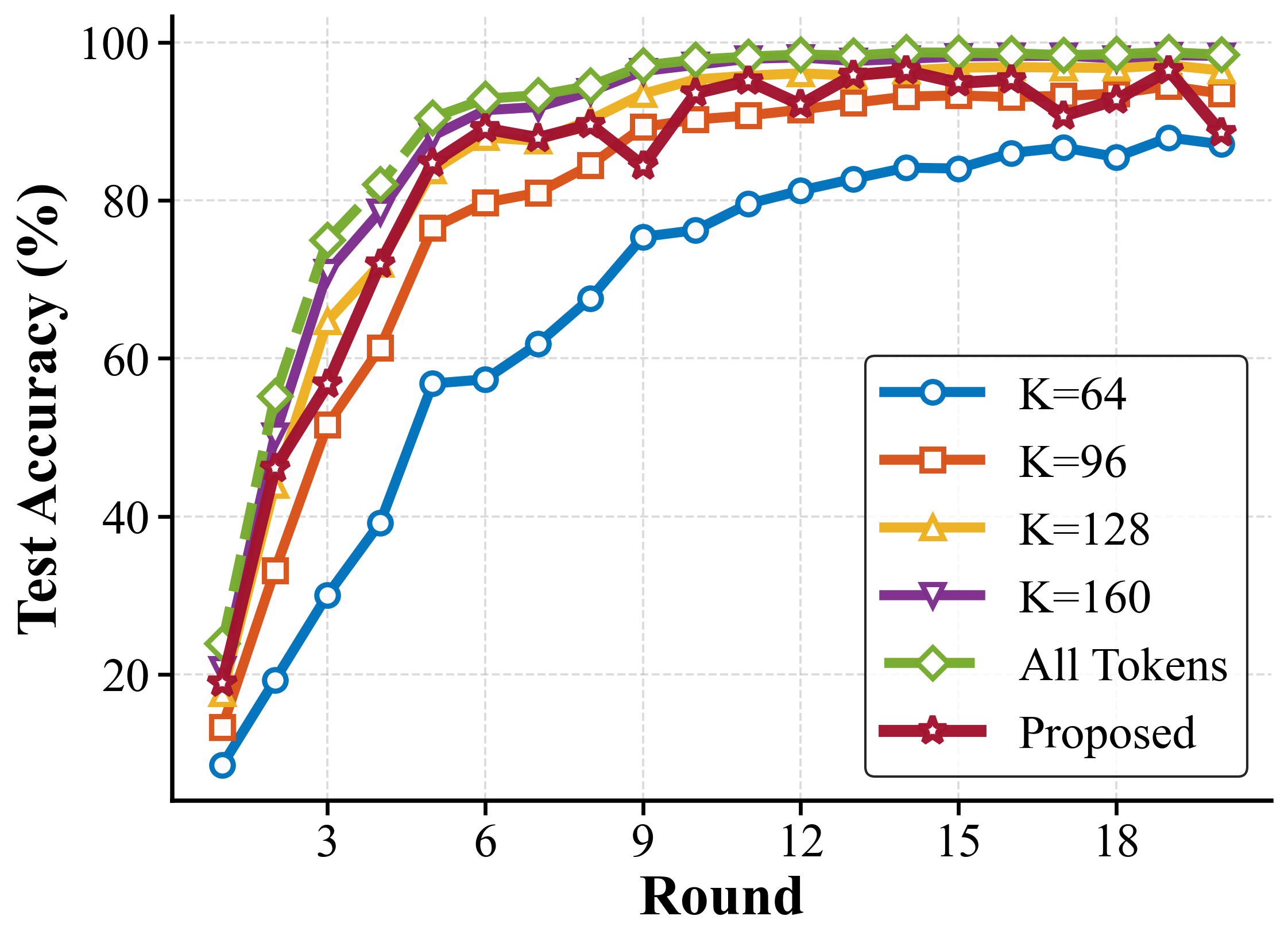}}\hfill
                    \subfloat[CUB-200-2011]{\includegraphics[width=0.32\linewidth]{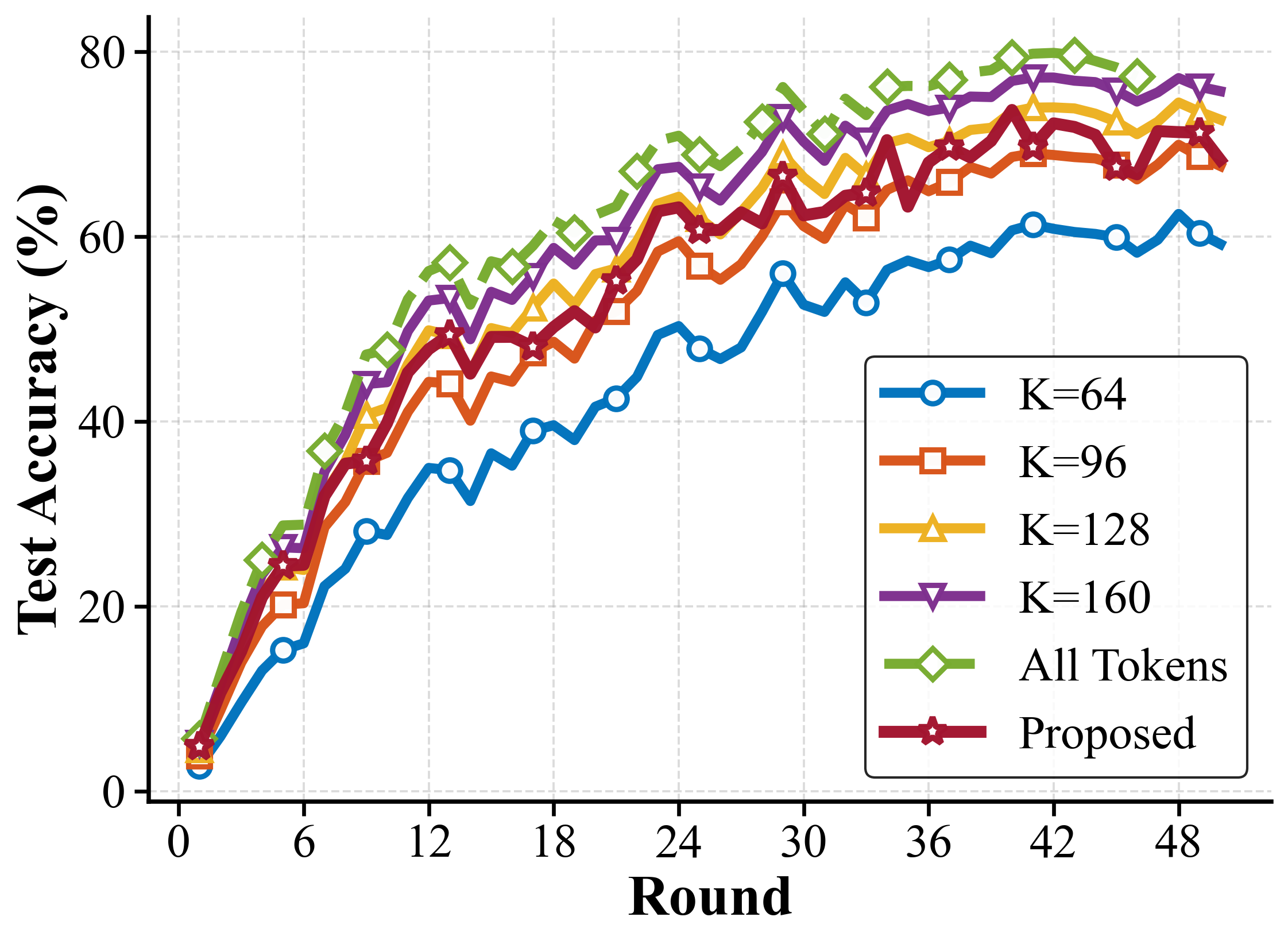}}
                    \caption{Testing Accuracy on different datasets with varing selected token
                     number.}
                    \label{fig:test-compare-token-number}
                \end{figure*}
    
        \subsection{Effectiveness of Token Selection Strategy}
            The effectiveness of the proposed token selection strategy is validated through both visualization and quantitative evaluations, as shown in Figs.~\ref{fig:token-vis} and~\ref{fig:test-compare-token-number}. Fig.~\ref{fig:token-vis} presents representative examples of the selected tokens under varying budgets $K$. As $K$ decreases from $160$ to $64$, the model retains progressively fewer tokens. The masked regions (shown in black) primarily correspond to background clutter, while the preserved tokens concentrate on regions containing class-discriminative attributes, such as object contours, textures, and salient color patterns. This trend is consistent across bird, flower, and generic object categories, indicating that the proposed metric effectively prioritizes semantically informative patches and suppresses visual redundancy.\par
            Fig.~\ref{fig:test-compare-token-number} reports the test accuracy for different token budgets ($K \in \{64, 96, 128, 160\}$) compared to the full-token baseline across three datasets. Several key observations can be made. First, increasing the token budget generally improves the final accuracy. Across ImageNet100, Flowers-102, and CUB-200-2011, the performance gap between $K = 64$ and $K = 160$ is noticeable, particularly in the fine-grained classification tasks. A larger token budget preserves more spatial detail, leading to higher final accuracy. Second, for $K \ge 128$, the accuracy curves closely approximate the all-token baseline, suggesting that a moderate token budget is sufficient to encapsulate task-relevant semantics. In ImageNet100 and Flowers-102, $K = 160$ achieves performance nearly identical to the full-token model throughout the training process. In the challenging CUB dataset, although the all-token input remains slightly superior, $K = 160$ and $K = 128$ still achieve competitive accuracy with significantly reduced communication overhead. Third, ST-SFLora exhibits accuracy fluctuations reflecting its constraint-aware nature. To handle time-varying channels, the algorithm proactively downscales the token budget during deep fading. This mechanism trades marginal short-term accuracy for ensured system feasibility under strict resource constraints, demonstrating a robustness trade-off that static baselines lack. 
    \section{Conclusion}
        \label{section9}
        In this paper, we proposed ST-SFLora, a semantic token-based Split Federated LoRA fine-tuning framework designed to enable efficient computation and communication of large-scale Transformer-based vision models on resource-constrained mobile edge devices. We defined a Semantic Transmission Efficiency (STE) metric to quantify the system trade-off between semantic information retention and communication costs. To further enhance system efficiency, we formulated a joint optimization problem that coordinates transmit power, bandwidth allocation, and token selection to maximize the system STE. Extensive experiments on various datasets and backbones demonstrate that ST-SFLora achieves the lowest client-side resource consumption among state-of-the-art baselines.\par
        Future work could explore more effective, learnable token selection strategies to further refine the efficiency trade-off. Additionally, generalizing the framework to other domains such as Natural Language Processing (NLP) would broaden the scope of efficient fine-tuning for foundation models.
        
	\bibliographystyle{IEEEtran}
	\bibliography{main.bib}

@inproceedings{devlin2019bert,
    title={Bert: Pre-training of deep bidirectional transformers for language understanding},
    author={Devlin, Jacob and Chang, Ming-Wei and Lee, Kenton and Toutanova, Kristina},
    booktitle={Proceedings of the 2019 conference of the North American chapter of the association for computational linguistics: human language technologies, volume 1 (long and short papers)},
    pages={4171--4186},
    year={2019}
}

@article{dosovitskiy2020image,
    title={An image is worth 16x16 words: Transformers for image recognition at scale},
    author={Dosovitskiy, Alexey},
    note ={\textit{arXiv:2010.11929}},
    year={2020}
}

@misc{achiam2023gpt,
    title={Gpt-4 technical report},
    author={Achiam, Josh and Adler, Steven and Agarwal, Sandhini and Ahmad, Lama and Akkaya, Ilge and Aleman, Florencia Leoni and Almeida, Diogo and Altenschmidt, Janko and Altman, Sam and Anadkat, Shyamal and others},
    note = {\textit{arXiv:2303.08774}},
    year={2023}
}

@INPROCEEDINGS{bai2024federated,
    author    = {Bai, Jiamu and Chen, Daoyuan and Qian, Bingchen and Yao, Liuyi and Li, Yaliang},
    booktitle = {Proc. Adv. Neural Inf. Process. Syst. (NeurIPS)},
    title     = {Federated Fine-Tuning of Large Language Models under Heterogeneous Tasks and Client Resources},
    year      = {2024},
    volume    = {37},
    number    = {},
    pages     = {14457--14483},
    month     = {Dec.},
    address   = {Vancouver, Canada}
}

@article{lin2024splitlora,
    title={Splitlora: A split parameter-efficient fine-tuning framework for large language models},
    author={Lin, Zheng and Hu, Xuanjie and Zhang, Yuxin and Chen, Zhe and Fang, Zihan and Chen, Xianhao and Li, Ang and Vepakomma, Praneeth and Gao, Yue},
    note ={\textit{arXiv:2407.00952}},
    year={2024}
}

@article{10835069,
    author={Qu, Guanqiao and Chen, Qiyuan and Wei, Wei and Lin, Zheng and Chen, Xianhao and Huang, Kaibin},
    journal={IEEE Commun. Surv. Tutorials}, 
    title={Mobile Edge Intelligence for Large Language Models: A Contemporary Survey}, 
    year={2025},
    volume={},
    number={},
    pages={1-1},
    month={early access, Jan. 09},
    doi={10.1109/COMST.2025.3527641},
    note={doi:{\color{blue}\href{http://dx.doi.org/10.1109/COMST.2025.3527641}{10.1109/COMST.2025.3527641}}}
}

@INPROCEEDINGS{Cai2023Efficient,
    author    = {Cai, Dongqi and Wu, Yaozong and Wang, Shangguang and Lin, Felix Xiaozhu and Xu, Mengwei},
    booktitle = {Proc. 29th Annu. Int. Conf. Mobile Comput. Netw. (MobiCom '23)},
    title     = {Efficient Federated Learning for Modern NLP},
    year      = {2023},
    volume    = {},
    number    = {},
    pages     = {1-16},
    month     = {Oct.},
    address   = {Madrid, Spain},
    doi       = {10.1145/3570361.3592505}
}

@inproceedings{mcmahan2017communication,
     author    = {McMahan, H. Brendan and Moore, Eider and Ramage, Daniel and Hampson, Seth and {Ag{\"u}era y Arcas}, Blaise},
      booktitle = {Proc. 20th Int. Conf. Artif. Intell. Stat. (AISTATS)},
      title     = {Communication-Efficient Learning of Deep Networks from Decentralized Data},
      year      = {2017},
      volume    = {54},
      number    = {},
      pages     = {1273-1282},
      month     = {Apr.},
      address   = {Fort Lauderdale, FL, USA},
      organization = {PMLR}
}

@inproceedings{li-etal-2025-mobilora,
    author    = {Li, Borui and Wang, Yitao and Ma, Haoran and Chen, Ligeng and Xiao, Jun and Wang, Shuai},
    booktitle = {Proc. 63rd Annu. Meeting Assoc. Comput. Linguist. (ACL) (Vol. 1: Long Papers)},
    title     = {MobiLoRA: Accelerating LoRA-based LLM Inference on Mobile Devices via Context-aware KV Cache Optimization},
    year      = {2025},
    volume    = {},
    number    = {},
    pages     = {23400-23410},
    month     = {Jul.},
    address   = {Vienna, Austria},
    doi       = {10.18653/v1/2025.acl-long.1140}
}

@article{vepakomma2018split,
  title={Split learning for health: Distributed deep learning without sharing raw patient data},
  author={Vepakomma, Praneeth and Gupta, Otkrist and Swedish, Tristan and Raskar, Ramesh},
  note={\textit{arXiv:1812.00564}},
  year={2018}
}

@inproceedings{thapa2022splitfed,
  author    = {Thapa, Chandra and Mahawaga Arachchige, Pathum Chamikara and Camtepe, Seyit and Sun, Lichao},
  booktitle = {Proc. AAAI Conf. Artif. Intell. (AAAI)},
  title     = {SplitFed: When Federated Learning Meets Split Learning},
  year      = {2022},
  volume    = {36},
  number    = {8},
  pages     = {8485-8493},
  month     = {Jun.},
  address   = {Vancouver, BC, Canada},
  doi       = {10.1609/aaai.v36i8.20825}
}

@ARTICLE{10839234,
  author={Qiang, Xianke and Chang, Zheng and Ye, Chaoxiong and Hämäläinen, Timo and Min, Geyong},
  journal={IEEE Wireless Commun.}, 
  title={Split Federated Learning Empowered Vehicular Edge Intelligence: Concept, Adaptive Design, and Future Directions}, 
  year={2025},
  volume={32},
  number={4},
  pages={90-97},
  doi={10.1109/MWC.009.2400219}}

@article{qiang2025deploying,
	title={Deploying Large AI Models on Resource-Limited Devices with Split Federated Learning},
	author={Qiang, Xianke and Liu, Hongda and Zhang, Xinran and Chang, Zheng and Liang, Ying-Chang},
	note={\textit{arXiv:2504.09114}},
	year={2025}
}

@ARTICLE{9611373,
    author={Wang, Yanmeng and Xu, Yanqing and Shi, Qingjiang and Chang, Tsung-Hui},
    journal={IEEE J. Sel. Areas Commun.}, 
    title={Quantized Federated Learning Under Transmission Delay and Outage Constraints}, 
    year={2022},
    volume={40},
    number={1},
    pages={323-341},
    month={Nov.},
    doi={10.1109/JSAC.2021.3126081}
}

@inproceedings{liu2023communication,
    title={Communication-efficient federated learning for heterogeneous edge devices based on adaptive gradient quantization},
    author={Liu, Heting and He, Fang and Cao, Guohong},
    booktitle={Proc. IEEE Conf. Comput. Commun. (INFOCOM)},
    pages={1--10},
    year={2023},
    address={New York City, NY, USA},
    month = {May},
    organization={IEEE}
}

@ARTICLE{10054381,
    title={On the Road to {6G}: Visions, Requirements, Key Technologies, and Testbeds}, 
    author={Wang, Cheng-Xiang and You, Xiaohu and Gao, Xiqi and Zhu, Xiuming and Li, Zixin and Zhang, Chuan and Wang, Haiming and Huang, Yongming and Chen, Yunfei and Haas, Harald and Thompson, John S. and Larsson, Erik G. and Renzo, Marco Di and Tong, Wen and Zhu, Peiying and Shen, Xuemin and Poor, H. Vincent and Hanzo, Lajos},
    journal={IEEE Commun. Surveys Tuts.}, 
    year={2023},
    volume={25},
    number={2},
    month = {Feb.},
    pages={905-974},
    publisher={{IEEE}},
    doi={10.1109/COMST.2023.3249835}}

@inproceedings{zeng2022not,
      author    = {Zeng, Wang and Jin, Sheng and Liu, Wentao and Qian, Chen and Luo, Ping and Ouyang, Wanli and Wang, Xiaogang},
      booktitle = {Proc. IEEE/CVF Conf. Comput. Vis. Pattern Recognit. (CVPR)},
      title     = {Not All Tokens Are Equal: Human-Centric Visual Analysis via Token Clustering Transformer},
      year      = {2022},
      volume    = {},
      number    = {},
      pages     = {11101-11111},
      month     = {Jun.},
      address   = {New Orleans, LA, USA},
      doi       = {10.1109/CVPR52688.2022.01082}
}

@ARTICLE{9923620,
    author={Liu, Xiaolan and Deng, Yansha and Mahmoodi, Toktam},
    journal={IEEE Trans. Wireless Commun.}, 
    title={Wireless Distributed Learning: A New Hybrid Split and Federated Learning Approach}, 
    year={2023},
    volume={22},
    number={4},
    pages={2650-2665},
    month={Oct.},
    doi={10.1109/TWC.2022.3213411}
}

@article{wu2023split,
    title={Split learning over wireless networks: Parallel design and resource management},
    author={Wu, Wen and Li, Mushu and Qu, Kaige and Zhou, Conghao and Shen, Xuemin and Zhuang, Weihua and Li, Xu and Shi, Weisen},
    journal={IEEE J. Sel. Areas Commun.},
    volume={41},
    number={4},
    pages={1051--1066},
    year={2023},
    month={Feb.},
    publisher={IEEE}
}

@ARTICLE{10714368,
    author={Qiang, Xianke and Chang, Zheng and Hu, Yun and Liu, Lei and Hämäläinen, Timo},
    journal={IEEE Internet Things J.	}, 
    title={Adaptive and Parallel Split Federated Learning in Vehicular Edge Computing}, 
    year={2025},
    volume={12},
    number={5},
    pages={4591-4604},
    month={Mar.},
    doi={10.1109/JIOT.2024.3479158}
}

@article{ma2025splitfrozen,
  title={SplitFrozen: Split Learning with Device-side Model Frozen for Fine-Tuning LLM on Heterogeneous Resource-Constrained Devices},
  author={Ma, Jian and Lyu, Xinchen and Jiang, Jun and Cui, Qimei and Yao, Haipeng and Tao, Xiaofeng},
  note={\textit{arXiv:2503.18986}},
  year={2025}
}

@ARTICLE{9277666,
    author={Zheng, Sihui and Shen, Cong and Chen, Xiang},
    journal={IEEE J. Sel. Areas Commun.	s}, 
    title={Design and Analysis of Uplink and Downlink Communications for Federated Learning}, 
    year={2021},
    month={Jul},
    volume={39},
    number={7},
    pages={2150-2167},
    doi={10.1109/JSAC.2020.3041388}
}

@ARTICLE{10256151,
    author={Wang, Bin and Fang, Jun and Li, Hongbin and Zeng, Bing},
    journal={IEEE Trans. Signal Process.}, 
    title={Communication-Efficient Federated Learning: A Variance-Reduced Stochastic Approach With Adaptive Sparsification}, 
    year={2023},
    volume={71},
    number={},
    pages={3562-3576},
    month={Sep.},
    doi={10.1109/TSP.2023.3316588}
}

@ARTICLE{10038639,
    author={Chen, Rui and Shi, Dian and Qin, Xiaoqi and Liu, Dongjie and Pan, Miao and Cui, Shuguang},
    journal={IEEE J. Sel. Areas Commun.}, 
    title={Service Delay Minimization for Federated Learning Over Mobile Devices}, 
    year={2023},
    month={Feb.},
    volume={41},
    number={4},
    pages={990-1006},
    doi={10.1109/JSAC.2023.3242711}
}

@INPROCEEDINGS{10621361,
  author    = {Su, Xiaoxin and Zhou, Yipeng and Cui, Laizhong and Lui, John C. S. and Liu, Jiangchuan},
  booktitle = {Proc. IEEE Conf. Comput. Commun. (INFOCOM)},
  title     = {Fed-CVLC: Compressing Federated Learning Communications with Variable-Length Codes},
  year      = {2024},
  volume    = {},
  number    = {},
  pages     = {601-610},
  month     = {May},
  address   = {Vancouver, Canada},
  doi       = {10.1109/INFOCOM52122.2024.10621361}
}

@article{10542529,
    author={Zhang, Xinran and Chen, Weilong and Zhao, Hui and Chang, Zheng and Han, Zhu},
    journal={IEEE Internet Things J.}, 
    title={Joint Accuracy and Latency Optimization for Quantized Federated Learning in Vehicular Networks}, 
    year={2024},
    volume={11},
    number={17},
    pages={28876-28890},
    month={Sept.},
    doi={10.1109/JIOT.2024.3406531}}

@ARTICLE{10026255,
    author={Lin, Xiaohan and Liu, Yuan and Chen, Fangjiong and Ge, Xiaohu and Huang, Yang},
    journal={IEEE Trans. Green Commun. Networking}, 
    title={Joint Gradient Sparsification and Device Scheduling for Federated Learning}, 
    year={2023},
    volume={7},
    number={3},
    pages={1407-1419},
    month={Jan.},
    doi={10.1109/TGCN.2023.3239373}
}

@ARTICLE{10791300,
  author={Zhang, Junhe and Ni, Wanli and Wang, Dongyu},
  journal={IEEE Trans. Veh. Technol.}, 
  title={Federated Split Learning With Model Pruning and Gradient Quantization in Wireless Networks}, 
  year={2025},
  volume={74},
  number={4},
  pages={6850-6855},
  doi={10.1109/TVT.2024.3515083}}

@article{zheng2023reducing,
  title={Reducing communication for split learning by randomized top-k sparsification},
  author={Zheng, Fei and Chen, Chaochao and Lyu, Lingjuan and Yao, Binhui},
  journal={\textit{arXiv:2305.18469}},
  year={2023}
}

@ARTICLE{10538233,
    author={Wang, Lingyi and Wu, Wei and Zhou, Fuhui and Yang, Zhaohui and Qin, Zhijin and Wu, Qihui},
    journal={IEEE Trans. Commun.}, 
    title={Adaptive Resource Allocation for Semantic Communication Networks}, 
    year={2024},
    volume={72},
    number={11},
    month={May.},
    pages={6900-6916},
    doi={10.1109/TCOMM.2024.3405355}
}

@article{hu2024energy,
    title={Energy-efficient federated edge learning with streaming data: A lyapunov optimization approach},
    author={Hu, Chung-Hsuan and Chen, Zheng and Larsson, Erik G.},
    journal={IEEE Transactions on Communications}, 
    title={Energy-Efficient Federated Edge Learning With Streaming Data: A Lyapunov Optimization Approach}, 
    year={2025},
    volume={73},
    number={2},
    pages={1142-1156},
    month={Feb.},
    doi={10.1109/TCOMM.2024.3443731}
}

@ARTICLE{11045879,
    author={Qiang, Xianke and Chang, Zheng and Min, Geyong},
    journal={IEEE Trans. Mob. Comput.}, 
    title={AIGC-Assisted Federated Learning for Vehicular Edge Intelligence: Vehicle Selection, Resource Allocation and Model Augmentation}, 
    year={2025},
    volume={24},
    number={11},
    month={Jun.},
    pages={11896-11909},
    doi={10.1109/TMC.2025.3581983}
}

@article{russakovsky2015imagenet,
  author    = {Russakovsky, Olga and Deng, Jia and Su, Hao and Krause, Jonathan and Satheesh, Sanjeev and Ma, Sean and Huang, Zhiheng and Karpathy, Andrej and Khosla, Aditya and Bernstein, Michael and Berg, Alexander C. and Fei-Fei, Li},
  journal   = {Int. J. Comput. Vis.},
  title     = {ImageNet Large Scale Visual Recognition Challenge},
  year      = {2015},
  volume    = {115},
  number    = {3},
  pages     = {211-252},
  month     = {Dec.},
  doi       = {10.1007/s11263-015-0816-y}
}

@inproceedings{nilsback2008automated,
 author    = {Nilsback, Maria-Elena and Zisserman, Andrew},
  booktitle = {Proc. 6th Indian Conf. Comput. Vis., Graph. Image Process. (ICVGIP)},
  title     = {Automated Flower Classification over a Large Number of Classes},
  year      = {2008},
  volume    = {},
  number    = {},
  pages     = {722-729},
  month     = {Dec.},
  address   = {Bhubaneswar, India},
  doi       = {10.1109/ICVGIP.2008.47},
  organization = {IEEE}
}

@TECHREPORT{wah2011caltech,
   author      = {Wah, Catherine and Branson, Steve and Welinder, Peter and Perona, Pietro and Belongie, Serge},
  title       = {The Caltech-UCSD Birds-200-2011 Dataset},
  institution = {California Institute of Technology},
  year        = {2011},
  number      = {CNS-TR-2011-001},
  address     = {Pasadena, CA, USA}
}
	
\end{document}